\def\itcs{0}
\def\notes{0}
\def\notes{0} \fi
\algrenewcommand\algorithmicrequire{\textbf{Input:}}
\algrenewcommand\algorithmicensure{\textbf{Output:}}
\algrenewcommand\algorithmicwhile{\textbf{While}}
\algrenewcommand\algorithmicfor{\textbf{For}}
\algrenewcommand\algorithmicreturn{\textbf{Return}}
\algrenewcommand\algorithmicif{\textbf{If}}
\algnewcommand\algorithmicconst{\textbf{Constants:}}
\algnewcommand\Const{\item[\algorithmicconst]}
\newcommand{\Sasoglu}{\c{S}a\c{s}o\u{g}lu}
\newcommand{\Arikan}{Arikan}
\newcommand{\poly}{\mathop{\mathrm{poly}}}
\newcommand{\Ch}{\mathcal{C}}
\newcommand{\bH}{\bar{H}}
\newcommand{\cY}{\mathcal{Y}}
\newcommand{\td}{\widetilde}
\newcommand{\cD}{\mathcal{D}}
\newcommand{\cS}{\mathcal{S}}
\newcommand{\bU}{\bar{U}}
\newcommand{\Oh}{\mathcal{O}}
\newcommand{\Ziid}{\bar{Z}}
\newcommand{\HMM}{\mathcal{H}}
\title{Algorithmic Polarization for Hidden Markov Models}
\author{%
Venkatesan Guruswami\thanks{Computer Science Department, Carnegie Mellon University, Pittsburgh, PA 15213. {\tt venkatg@cs.cmu.edu}. Most of this work was done when the author was visiting the Center for Mathematical Sciences and Applications, Harvard University, Cambridge, MA. Research supported in part by NSF grants CCF-1422045 and CCF-1814603.}
\and
Preetum Nakkiran\thanks{Harvard John A. Paulson School
of Engineering and Applied Sciences, Harvard University, 33 Oxford Street,
Cambridge, MA 02138, USA. Email: {\tt preetum@cs.harvard.edu}. Work supported in
part by the NSF Graduate Research Fellowship Grant No. DGE1144152,
and Madhu Sudan's Simons Investigator Award and NSF Award CCF 1715187.}
\and
Madhu Sudan\thanks{Harvard John A. Paulson School of Engineering and Applied Sciences, 33 Oxford Street, Cambridge, MA 02138, USA. {\tt madhu@cs.harvard.edu.} Work supported in part by a Simons Investigator Award and NSF Award CCF 1715187.}
}
\date{\today}
\begin{document}
\maketitle
\thispagestyle{empty}
\mnote{If this note is appearing then so are author names and other notes. Remove both before final submission by setting itcs flag to 1.}

\begin{abstract}

Using a mild variant of polar codes we design linear compression schemes compressing Hidden Markov sources (where the source is a Markov chain, but whose state is not necessarily observable from its output), and to decode from Hidden Markov channels (where the channel has a state and the error introduced depends on the state). We give the first polynomial time algorithms that manage to compress and decompress (or encode and decode) at input lengths that are polynomial \emph{both} in the gap to capacity and the mixing time of the Markov chain. Prior work achieved capacity only asymptotically in the limit of large lengths, and polynomial bounds were not available with respect to either the gap to capacity or mixing time.
Our results operate in the setting where the source (or the channel) is {\em known}. If the source is {\em unknown} then compression at such short lengths would lead to effective algorithms for learning parity with noise --- thus our results are the first to suggest a separation between the complexity of the problem when the source is known versus when it is unknown.

\end{abstract}

\newpage

\ifnum\notes=1
TODOs:
\begin{itemize}
\item Many comments and remarks (non-stationary, limiting entropy vs $H(Z^n)$, etc.)
\end{itemize}

\fi

\section{Introduction}
\label{sec:intro}

We study the problem of designing coding schemes, specifically encoding and decoding algorithms, that overcome errors caused by stochastic, but not memoryless, channels. Specifically we consider the class of ``(hidden) Markov channels'' that are stateful, with the states evolving according to some Markov process, and where the distribution of error depends on the state.\footnote{We use the term {\em hidden} to emphasize the fact that the state itself is not directly observable from the actions of the channel, though in the interest of succinctness we will omit this term for most of the rest of this section.} Such Markovian models capture many natural settings of error, such as bursty error models. (See for example, Figure~\ref{fig:one}.) Yet they are often less understood than their memoryless counterparts (or even ``explicit Markov models'' where the state is completely determined by the actions of the channel). For instance (though this is not relevant to our work) even the capacity of such channels is not known to have a closed form expression in terms of channel parameters. (In particular the exact capacity of the channel in Figure~\ref{fig:one} is not known as a function of $\delta$, $p$ and $q$!)

\begin{figure}[h]
		\centering\includegraphics[scale=.5]{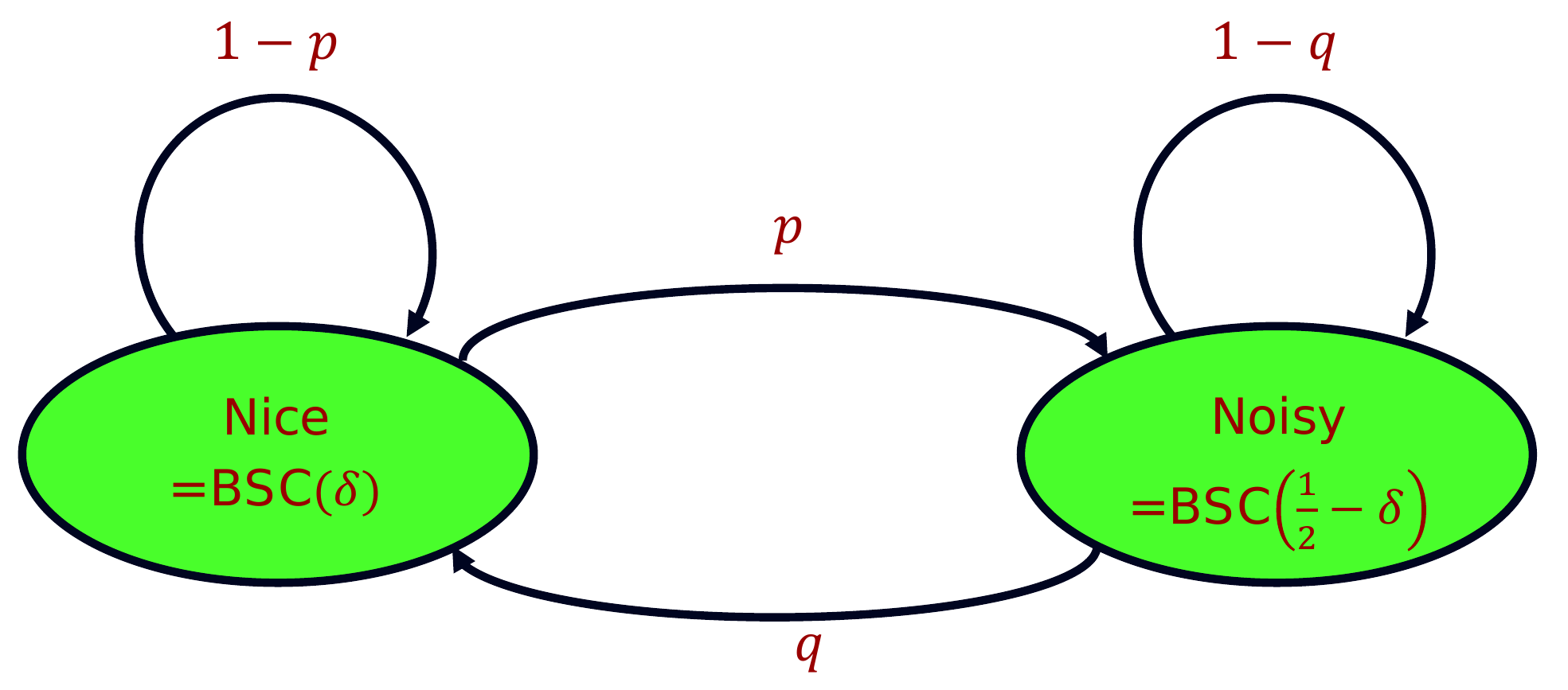}
	\caption{A Markovian Channel: The Nice state flips bits with probability $\delta$ whereas the Noisy state flips with probability $1/2 - \delta$. The stationary probability of the Nice state is $q/p$ times that of the Noisy state.}\label{fig:one}
\end{figure}

In this work we aim to design coding schemes that achieve rates arbitrarily close to capacity. Specifically given a channel of capacity $C$ and gap parameter $\epsilon > 0$, we would like to design codes  that achieve a rate of at least $C - \epsilon$, that admit polynomial time algorithms even at small block lengths $n \geq \poly(1/\epsilon)$. Even for the memoryless case such coding schemes were not known till recently. In 2008, \Arikan~\cite{arikan-polar} invented a completely novel approach to constructing codes based on ``channel polarization" for communication on binary-input memoryless channels, and proved that they enable achieving capacity in the limit of large code lengths with near-linear complexity encoding and decoding. 
 In 2013, independent works by Guruswami and Xia~\cite{GX15} and Hassani et al.~\cite{HAU14} gave a finite-length analysis of \Arikan's polar codes, proving that they approach capacity fast, at block lengths
bounded by $\poly(1/\epsilon)$ where $\epsilon > 0$ is the difference between the channel capacity and code rate.

The success of polar codes on the memoryless channels might lead to the hope that maybe these codes, or some variants, might lead to similar coding schemes for channels with memory. But such a hope is not easily justified: the analysis of polar codes relies heavily on the fact that errors introduced by the channel are independent and this is exactly what is not true for channels with memory. Despite this seemingly insurmountable barrier, \Sasoglu~\cite{Sasoglu} and later \Sasoglu\ and Tal~\cite{SasogluTal}
 showed, quite surprisingly, that the analysis of polar codes can be carried out even with Markovian channels (and potentially even broader classes of channels). Specifically they show that these codes converge to capacity and even the probability of decoding error, under maximum likelihood decoding, drops exponentially fast in the block length (specifically as $2^{-n^{\Omega(1)}}$ on codes of length $n$; see also \cite{ShuvalTal}, where exponentially fast polarization was also shown at the high entropy end). An extension of \Arikan's successive cancellation decoder from the memoryless case was also given by \cite{WHYLH}, building on an earlier version~\cite{WLH} specific to intersymbol interference channels, leading to efficient decoding algorithms.
 

However, none of the works above give small bounds on the block length of the codes as a function of the gap to capacity, and more centrally to this work, on the mixing time of the Markov chain.
The latter issue gains importance when we turn to the issue of ``compressing Markov sources'' which turns out to be an intimately related task to that of error-correction for Markov channels as we elaborate below and which is also the central task we turn to in this paper. We start by describing Markov source and the (linear) compression problem.

A (hidden) Markov source over alphabet $\Sigma$ is given by a Markov chain on some finite state space where each state $s$ has an associated distribution $D_s$ over $\Sigma$. The source produces information by performing a walk on the chain and at each time step $t$, outputting a letter of $\Sigma$ drawn according to the distribution associated with the state at time $t$ (independent of all previous choices, and previous states).\footnote{The phrase ``hidden'' emphasizes the fact that the output produced by the source does not necessarily reveal the sequence of states visited.}
In the special case of additive Markovian channels where the output of the channel is the sum of the transmitted word with an error vector produced by a Markov source, a well-known correspondence shows that error-correction for the additive Markov channel reduces to the task of designing a compression and decompression algorithm for Markovian sources, with the compression being {\em linear}. Indeed in this paper we only focus on this task: our goal turns into that of compressing $n$ bits generated by the source to its entropy upto an additive factor of $\epsilon n$, while $n$ is only polynomially large in $1/\epsilon$.

A central issue in the task of compressing a source is whether the source is {\em known} to the compression algorithm or not. While ostensibly the problem should be easier in the ``known'' setting than in the ``unknown'' one, we are not aware of any formal results suggesting a difference in complexity. It turns out that compression in the setting where the source is unknown is at least as hard as ``learning parity with noise'' (we argue this in Appendix~\ref{app:LPN}), {\em if} the compression works at lengths polynomial in the mixing time and gap to capacity. This suggests that the unknown source setting is hard (under some current beliefs). No corresponding hardness was known for the task of compressing sources when they are known, but no easiness result seems to have been known either (and certainly no linear compression algorithm was known). This leads to the main question addressed (positively) in this work.

\paragraph{Our Results.}
Our main result is a construction of codes for {\em additive Markov channels} that gets $\epsilon$ close to capacity at block lengths polynomial in $1/\epsilon$ and the mixing time of the Markov chain, with polynomial (in fact near-linear) encoding and decoding time. Informally additive channels are those that map inputs from some alphabet $\Sigma$ to outputs over $\Sigma$ with an abelian group defined on $\Sigma$ and the channel generates an error sequence independent of the input sequence, and the output of the channel is just the coordinatewise sum of the input sequence with the error sequence. (In our case the alphabet $\Sigma$ is a finite field of prime cardinality.) The exact class of channels is described in Definition~\ref{def:amc}, and Theorem~\ref{thm:main-channel} states our result formally. We stress that we work with additive channels only for conceptual simplicity and that our results should extend to more general symmetric channels though we don't do so here. Prior to this work no non-trivial Markov channel was known to achieve efficient encoding and decoding at block lengths polynomial in either parameter (gap to capacity or mixing time).


Our construction and analyses turn out to be relatively simple given
the works of \Sasoglu\ and Tal~\cite{Sasoglu,SasogluTal} and the work
of Blasiok et al.~\cite{BGNRS}. The former provides insights on how to
work with channels with memory, whereas the latter provides tools
needed to get short block length and cleaner abstractions of the
efficient decoding algorithm that enable us to apply it in our
setting. Our codes are a slight variant of polar codes, where we apply
the polar transforms independently to blocks of inputs. This enables
us to apply the analysis of \cite{BGNRS} in an essentially black box manner,
benefiting both from its polynomially fast convergence guarantee to
capacity as well as its generality covering all polarizing matrices over any prime alphabet (and not just the basic Boolean $2 \times 2$ transform covered in \cite{SasogluTal}). 

We give a more detailed summary of how our codes are obtained and how
we analyze them in Section~\ref{sec:overview} after stating our
results and main theorem formally.

\section{Definitions and Main Results}

\subsection{Notation and Definitions}

We will use $\F_q$ to denote the finite field with $q$ elements. Throughout the paper, we will deal only with the case when $q$ is a prime. (This restriction in turn comes from the work of \cite{BGNRS} whose results we use here.)

We use several notations to index matrices.
For a matrix $M \in \F_q^{m \x n}$, the entry in the $i$th row, $j$th column is
denoted $M_{i, j}$ or $M_{(i, j)}$.
Columns are denoted by superscripts, i.e., $M^j \in \F_q^m$ denotes the $i$th
column of $M$. Note that $M^j_i = M_{(i, j)}$.
We also use the indices as sets in the natural way. For example
$M^{\leq j} \in \F_q^{m \times j}$ denotes the first $j$ columns of $M$.
$M^{\leq j}_{\leq i}$ denotes the submatrix of elements in the first $j$ columns
and first $i$ rows.
$M_{\prec (i, j)}$ denotes the set of elements of $M$ indexed by
lexicographically smaller indices than $(i, j)$.
Multiplication of a matrix $M \in \F_q^{m \x n}$ with a vector $v \in \F_q^n$ is
denoted $Mv$.

For a finite set $S$, let $\Delta(S)$
denote the set of probability distributions over $S$.
For a random variable $X$ and event $E$, we write $X | E$ to denote the
conditional distribution of $X$, conditioned on $E$.
For example, we may write $X|\{X_1 = 0\}$.

The \emph{total-variation distance} between two distributions $p, q \in
\Delta(U)$ is
$$||p- q||_1 := \sum_i |p(i) - q(i)|$$

We consider compression schemes, as a map $\F_q^n \to \F_q^m$.
The \emph{rate} of a compression scheme $\F_q^n \to \F_q^m$ is the ratio $m/n$.

For a random variable $X \in [q]$,
the \emph{(non-normalized) entropy} is denoted $H(X)$, and is
$$H(X) := -\sum_{i} \Pr[X = i] \log(\Pr[X = i])$$
and the \emph{normalized entropy} is denoted $\bH(X)$, and is
$$\bH(X) := \frac{1}{\log(q)}H(X)$$

\begin{definition}
A \emph{Markov chain} $\mathcal{M} = (\ell,\Pi,\pi_0)$ is given by an $\ell$ representing the state space $[\ell]$, a transition
matrix $\Pi \in \R^{\ell \x \ell}$, and a distribution on initial state
$\pi_0 \in \Delta([\ell])$. The rows of $\Pi$, denoted $\Pi_1,\ldots,\Pi_\ell$ are thus elements of $\Delta([\ell])$.
A Markov chain generates a random sequence of states $X_0, X_1, X_2, \dots$
determined by letting $X_0 \sim \pi_0$, and $X_t \sim \Pi_{X_{t-1}}$ for
$t > 0$ given $X_0,\ldots,X_{t-1}$.
The stationary distribution $\pi \in \Delta([\ell])$ is the distribution such
that if $X_0 \sim \pi$, then all $X_t$'s are marginally identically distributed as $\pi$.
\end{definition}

We consider only Markov chains which are irreducible and aperiodic, and hence have a stationary distribution to which they converge in the limit. The rate of convergence is measured by the mixing time, defined below.

\begin{definition}
The \emph{mixing time} of a Markov chain is the constant
$\tau > 0$ such that for every initial state $s_0$ of the Markov chain,
the distribution of state $s_\ell$ is $\exp(-\ell / \tau)$-close in total
variation distance to the
stationary distribution $\pi$.
\end{definition}

\begin{definition}
A \emph{(stationary, hidden) Markov source} $\HMM = (\Sigma,\mathcal{M},\{\cS_1,\ldots,\cS_\ell\})$ is specified by an alphabet $\Sigma$, a Markov chain $\mathcal{M}$ on $\ell$ states and distributions $\{\cS_i \in \Delta(\Sigma)\}_{i \in [\ell]}$. The output of the source is a sequence $Z_1,Z_2,\ldots,$ of random variables obtained by first sampling a sequence $X_0,X_1,X_2,\ldots$ according to $\mathcal{M}$ and then sampling $Z_i \sim \cS_{X_i}$ independently for each $i$.
We let $\HMM_t$ the distribution of output sequences of length $t$, and $\HMM_t^{\otimes s}$ denote the distribution of $s$ i.i.d. samples from $\HMM_t$.
\end{definition}

Similarly, we define an \emph{additive Markov channel} as a channel which adds
noise from a Markov source.

\begin{definition}\label{def:amc}
An \emph{additive Markov channel $\Ch_{\HMM}$}, specified by a Markov source $\HMM$ over alphabet $\F_q$,  is a randomized map
$\Ch_{\HMM}: \F_q^* \to \F_q^*$ obtained as follows:
On channel input $X_1,\ldots,X_n$, the channel outputs $Y_1,\ldots,Y_n$ where $Y_i = X_i + Z_i$ where $Z = (Z_1,\ldots,Z_n) \sim \HMM_n$.
\end{definition}

\begin{definition}
A \emph{linear code} is a linear map $C: \F_q^k \to \F_q^n$.
The \emph{rate} of a code is the ratio $k/n$.
\end{definition}


\begin{definition}
\label{def:constr}
For all sets $A, B$, a \emph{constructive source over $(A | B)$ samplable in
time $T$} is a
distribution $\cD \in \Delta(A \times B)$ such that $(a, b) \sim \cD$ can be sampled
efficiently in time at most $T$,
and for every fixed $b \in B$, the conditional distribution $A | \{B = b\}$ can be
sampled efficiently in time at most $T$.
\end{definition}


\begin{proposition}
    Every Markov source with state space $[\ell]$ is a constructive source
    samplable in time $\Oh(n\ell^2)$.
    That is, for every $n$, let $Y_1, \dots Y_n$ be the random variables generated by the
    Markov source.
    Then, the sequence $Y_1, \dots Y_n$ can be sampled in time at most
    $\Oh(n \ell^2)$, and moreover for every setting of $Y_{< n} = y_{< n}$, the
    distribution $(Y_n | Y_{< n} = y_{<n})$ can be sampled in time $\Oh(n
    \ell^2)$.
\end{proposition}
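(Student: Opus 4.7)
The plan is to invoke standard hidden Markov model algorithms: straight-ahead forward simulation for the unconditional sample, and the classical forward (filtering) recursion followed by a one-step roll-out for the conditional sample. Together these handle both parts of the constructive-source definition.

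For the joint distribution of $Y_1,\ldots,Y_n$, I simply simulate $\HMM$: draw the initial state $X_0 \sim \pi_0$ and, for each $t \ge 1$, successively draw $X_t \sim \Pi_{X_{t-1}}$ and $Y_t \sim \cS_{X_t}$. Each iteration samples from a distribution supported on at most $\ell$ states (and at most $|\Sigma|$ symbols, which we absorb into the $\Oh(\ell^2)$ per-step budget), costing $\Oh(\ell)$ per step, so the total time is $\Oh(n\ell) = \Oh(n\ell^2)$.

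For the conditional sample of $Y_n$ given $Y_{<n} = y_{<n}$, I run the forward algorithm on $\HMM$. Define
\[
\alpha_t(s) \;:=\; \Pr\!\big[X_t = s,\; Y_1 = y_1,\ldots, Y_t = y_t\big],
\]
with base case $\alpha_1(s) = \cS_s(y_1)\sum_{s'} \pi_0(s')\,\Pi_{s',s}$ and recursion
\[
\alpha_t(s) \;=\; \cS_s(y_t)\sum_{s' \in [\ell]} \alpha_{t-1}(s')\,\Pi_{s',s}
\]
for $2 \le t \le n-1$. Each step is an $\ell \times \ell$ matrix-vector product, costing $\Oh(\ell^2)$, so computing $\alpha_{n-1}$ takes $\Oh(n\ell^2)$ time in total. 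The posterior then satisfies $\Pr[X_{n-1} = s \mid Y_{<n} = y_{<n}] \propto \alpha_{n-1}(s)$; I normalize, draw $X_{n-1}$ from this distribution, then $X_n \sim \Pi_{X_{n-1}}$, and finally output $Y_n \sim \cS_{X_n}$. Correctness follows from the Markov property: given $X_{n-1}$, the pair $(X_n, Y_n)$ is independent of $Y_{<n}$, so the returned $Y_n$ has the correct conditional distribution.

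The only minor wrinkle is that the definition demands samplability for \emph{every} fixed $y_{<n}$, including zero-probability strings where the conditional is undefined. This is easily handled by adopting any fallback convention (e.g., output a fixed symbol whenever the computed $\alpha_{n-1}$ vector is identically zero), and does not affect the overall $\Oh(n\ell^2)$ running time; there is no real obstacle here.
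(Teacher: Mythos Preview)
Your proof is correct and follows essentially the same approach as the paper: direct simulation for the unconditional sample, and the standard forward algorithm for the conditional sample. The paper's proof is in fact a one-line reference to the Forward Algorithm (spelled out in an appendix), so your write-up is if anything more detailed than what appears there.
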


\begin{proof}
Sampling $Y_1, \dots, Y_n$ can clearly be done by simulating the Markov chain,
and sampling from the conditional distribution $(Y_n | Y_{< n} = y_{<n})$
is possible using the standard \emph{Forward Algorithm} for inference in Hidden Markov Models, which we describe for completeness in Appendix~\ref{app:forward-algo}.
\end{proof}

Finally, we will use the following notion of \emph{mixing matrices} from
~\cite{KSU10,BGNRS}, characterizing which matrices lead to good polar codes. In the study of polarization it is well-known that lower-triangular matrices do not polarize at all, and the polarization characteristics of matrices are invariant under column permutations. Mixing matrices are defined to be those that avoid the above cases.

\begin{definition}
	\label{def:mixing}
For prime $q$ and $M \in \F_q^{k \x k}$, $M$ is said to be a
\emph{mixing matrix} if $M$ is invertible and for every permutation of the columns
of $M$, the resulting matrix is not lower-triangular.
\end{definition}

\subsection{Main Theorems}

We are now ready to state the main results of this work formally.
We begin with the statement for compressing the output of a hidden Markov model.

\begin{theorem}
\label{thm:main}
For every prime $q$ and mixing matrix $M \in \F_q^{k \x k}$ there exists a
preprocessing algorithm ({\sc Polar-Preprocess}, Algorithm~\ref{algo:subset}),
a compression algorithm ({\sc Polar-Compress}, Algorithm~\ref{algo:compressor}),
a decompression algorithm ({\sc Polar-Decompress}, Algorithm~\ref{algo:fast-decompressor})
and a polynomial $p(\cdot)$ such that for every $\eps > 0$, the following properties hold:
\begin{enumerate}
\item {\sc Polar-Preprocess} is a randomized algorithm
that takes as input a
Markov source $\HMM$ with $\ell$ states, and $t \in \mathbb{N}$, and runs in time $\poly(n, \ell, 1/\eps,q)$ where $n=k^{2t}$
 and outputs
auxiliary information for the compressor and decompressor (for $\HMM_n$).

\item {\sc Polar-Compress} takes as input a sequence $Z \in \F_q^n$
as well as the auxiliary information output by the preprocessor, runs in time $\Oh(n \log n)$, and outputs a compressed string $\tilde{U} \in \F_q^{\bH(Z) + \eps n}$. Further, for every auxiliary input, the map $Z \to \tilde{U}$ is a linear map.

\item {\sc Polar-Decompress} takes as input a Markov source $\HMM$ a compressed string $\tilde{U} \in \F_q^{\bH(Z)+\eps n}$ and the auxiliary information output by the preprocessor, runs in time  $\Oh(n^{3/2}\ell^2 + n \log{n})$ and outputs $\hat{Z} \in \F_q^n$.
\footnote{The runtime of the decompression algorithm can be improved to a runtime of $\Oh(n^{1+\delta}\ell^2 + n
\log{n})$ by a simple modification. In particular, by taking the input matrix $Z$
to be $n^{1-\delta} \x n^\delta$ instead of $n^{1/2} \x n^{1/2}$. In fact we believe the decoding algorithm can be improved to an $O(n \log n)$ time algorithm with some extra bookkeeping though we don't do so here.}
\end{enumerate}


\noindent The guarantee provided by the above algorithms is that
with probability at least $1-\exp(-\Omega(n))$, the Preprocessing Algorithm outputs
auxiliary information $S$ such that
$$\Pr_{Z \sim \HMM_n}[\textsc{Polar-Decompress}(\HMM, S; \textsc{Polar-Compress}(Z; S)) \neq Z] \leq \Oh(\frac{1}{n^2}),$$
provided $n > p(\tau/\eps)$ where $\tau$ is the mixing time of $\HMM$.

(In the above $\Oh(\cdot)$ hides constants depending $k$ and $q$, but not on $\ell$ or $n$.)
\end{theorem}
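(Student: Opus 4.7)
The plan is to reduce the problem to a single application of the BGNRS polar coding scheme on an i.i.d. vector-valued source, with a coupling argument absorbing the error introduced by the memory of the Markov chain. First I would view the output $Z \in \F_q^n$ with $n = k^{2t}$ as an $s \times m$ matrix, $s = m = k^t$, where row $i$ is the $i$th consecutive block of $m$ symbols. The key reduction is a coupling argument: by choosing $t$ so that $n > p(\tau/\eps)$ for a sufficiently large polynomial $p$, one ensures $m = k^t \gg \tau \log n$, whereupon a union bound over block boundaries shows that the joint distribution of the $s$ rows is $\exp(-\Omega(m/\tau))$-close in total variation to the i.i.d. product distribution $\HMM_m^{\otimes s}$.

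Next I would apply the BGNRS polar transform built from the mixing matrix $M$ across the $s$ rows, acting coordinate-wise in $\F_q$ within the blocks, to obtain $\tilde U \in \F_q^{s \times m}$ as a linear image of $Z$. Applied to the idealized i.i.d. source $\HMM_m^{\otimes s}$, BGNRS's polarization theorem guarantees that for all but at most $\eps n / 2$ of the indices $(i,j)$, the conditional entropy $\bH(\tilde U_{(i,j)} \mid \tilde U_{\prec (i,j)})$ is within $1/\poly(n)$ of either $0$ or $1$, with convergence polynomial in $1/\eps$. The preprocessor uses the constructive-source property of $\HMM_m^{\otimes s}$ (samplable in $\poly(n,\ell)$ time via the Forward Algorithm) to estimate these conditional entropies by sampling, and outputs the set $S$ consisting of all non-low-entropy indices; the size of $S$ is at most $\bH(Z) + \eps n$ since conditional entropies sum to the joint entropy.

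Given $S$, compression outputs $(\tilde U_{(i,j)})_{(i,j) \in S}$, which is a linear function of $Z$ regardless of $S$. Decompression proceeds by successive cancellation in lexicographic order: for indices in $S$ the value is read from the compressed string; for indices outside $S$ the decoder picks the argmax of the conditional distribution of $\tilde U_{(i,j)}$ given all previously determined coordinates, computed via the Forward Algorithm on the HMM in $\Oh(m \ell^2) = \Oh(\sqrt{n}\,\ell^2)$ time per call. This totals $\Oh(n^{3/2}\ell^2 + n \log n)$ including the $\Oh(n \log n)$ cost of inverting the polar transform. Correctness follows from BGNRS's $\Oh(1/n^2)$ bound on the successive-cancellation error over the i.i.d. product source, plus the negligible $\exp(-\Omega(m/\tau))$ additive contribution from the coupling.

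The main obstacle I expect is handling the vector-valued reduction cleanly: BGNRS is stated for scalar i.i.d. sources, but here the i.i.d. super-symbols live in $\F_q^m$. I would reinterpret the guarantees coordinate-wise within blocks, exploiting that $M$ acts coordinate-wise in $\F_q$, so that the BGNRS analysis for each of the $m$ within-block positions can be stitched together under the correct lexicographic conditioning that mixes across- and within-block indices. A secondary obstacle is verifying that the preprocessor's sampling-based estimates of conditional entropy, computed under the idealized product source, remain valid for the true HMM source (transferring guarantees through the TV coupling) and that the Forward Algorithm invocations inside both the preprocessor and the decoder can be amortized to achieve the claimed runtimes.
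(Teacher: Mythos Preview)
Your overall strategy matches the paper's: view $Z$ as an $m\times m$ matrix with row-by-row arrival, apply the polar transform $P_m$ column-wise only, couple to the i.i.d.-row source $\HMM_m^{\otimes m}$, and invoke the BGNRS analysis column by column with the previously recovered columns serving as side information. However, your coupling step as stated is false, and the paper's construction contains an extra ingredient specifically to repair it.

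You claim that the joint distribution of the $m$ rows of $Z\sim\HMM_{m^2}$ is $\exp(-\Omega(m/\tau))$-close in total variation to $\HMM_m^{\otimes m}$. This is not true: row $i$ ends at time $im$ and row $i+1$ begins at time $im+1$, so there is \emph{no} mixing gap between consecutive rows. Whenever the observations are informative about the hidden state, conditioning on all of row $i$ pins down the hidden state at time $im$, and row $i+1$ then starts from the wrong (non-stationary) distribution. In general the TV distance between $Z$ and the product of its row marginals is $\Omega(1)$, not $\exp(-\Omega(m/\tau))$; the length $m\gg\tau$ of the rows is irrelevant because the gap between rows is a single step.

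The paper's fix, absent from your proposal, is to truncate: the compressor applies the polar analysis only to the first $(1-\eps)m$ columns and outputs the last $\eps m$ columns verbatim (costing $\eps n$ extra output symbols, which is within budget). With this truncation, between the last \emph{observed} symbol of row $i$ (at time $im-\eps m$) and the first symbol of row $i+1$ (at time $im+1$) there are $\eps m$ unobserved steps, which \emph{do} provide mixing; a hybrid over rows then gives the $m\exp(-\eps m/\tau)$ bound on the truncated matrices (this is Lemma~\ref{lem:hybrids} in the paper). Once this is in place, everything else in your plan --- the column-by-column application of BGNRS with $\bar Z^{<j}$ as side information (which, incidentally, is precisely how the paper dissolves your ``vector-valued'' obstacle: conditioning on prior columns reduces each column to an i.i.d.\ \emph{scalar} source with side information, so Theorem~\ref{thm:exp-polar} applies directly), the Forward-Algorithm-based preprocessing, and the per-column fast decoding --- goes through essentially as you describe.
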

%

The above linear compression directly yields channel coding for additive Markov
channels, via a standard reduction (the details of which are in
Section~\ref{sec:proofs}.)
\begin{theorem}
\label{thm:main-channel}
For every prime $q$ and mixing matrix $M \in \F_q^{k \x k}$ there exists a
randomized preprocessing algorithm \textsc{Preprocess},
an encoding algorithm \textsc{Enc},
a decoding algorithm \textsc{Dec},
and a polynomial $p(\cdot)$ such that for every $\eps > 0$, the following properties hold:
\begin{enumerate}
\item {\sc Preprocess} is a randomized algorithm
that takes as input an additive Markov channel $\Ch_{\HMM}$ described by
Markov source $\HMM$ with $\ell$ states,
and $t \in \mathbb{N}$, and runs in time $\poly(n, \ell, 1/\eps)$ where
$n=k^{2t}$, and outputs auxiliary information for $\HMM_n$.
\item {\sc Enc} takes as input a message $x \in \F_q^r$, where $r \geq n(1 - \frac{\bH(Z)}{n} - \eps)$, as well as auxiliary information from the preprocessor and outputs
and computes {\sc Enc}$(x) \in \F_q^n$ in $\Oh(n \log n)$ time.
\item {\sc Dec} takes as input the Markov source $\HMM$, auxiliary information from the preprocessor
and a string $z \in \F_q^n$, runs in time $\Oh_q(n^{3/2}\ell^2 + n \log{n})$,
and outputs an estimate $\hat{x} \in \F_q^r$ of the message $x$.
\footnote{This can similarly be improved to a runtime of $\Oh_q(n^{1+\delta}\ell^2 + n
\log{n})$.}
\end{enumerate}
The guarantee provided by the above algorithms is that
with probability at least $1 - \exp(-\Omega(n))$, the Preprocessing algorithm outputs
$S$ such that for all $x \in \F_q^r$ we have
$$\Pr_{\Ch_\HMM}[\textsc{Dec}(\HMM; \Ch_{\HMM}(\textsc{Enc}(C; x))) \neq x]
\leq \Oh(\frac{1}{n^2}), $$
provided $n > p(\tau/\eps)$ where $\tau$ is the mixing time of $\HMM$.

(In the above $\Oh(\cdot)$ hides constants that may depend on $k$ and $q$ but not on $\ell$ or $n$.)
\end{theorem}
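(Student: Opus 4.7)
The plan is a standard syndrome-decoding reduction from the linear source-compression of Theorem~\ref{thm:main} to channel coding on the additive Markov channel $\Ch_{\HMM}$. The preprocessor \textsc{Preprocess} simply invokes \textsc{Polar-Preprocess} on $\HMM$ and $t$, obtaining auxiliary information $S$ that determines a linear compression map $L = L_S : \F_q^n \to \F_q^m$ with $m \le \bH(Z) + \eps n$ for $Z \sim \HMM_n$. Take the code to be $C := \ker(L) = \{c \in \F_q^n : Lc = 0\}$; its $\F_q$-dimension is at least $n - m \ge r = n - (\bH(Z) + \eps n)$, which accommodates any message $x \in \F_q^r$.

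To make encoding efficient, I would use the structure of the polar compression: \textsc{Polar-Compress} computes $Lz$ by applying a blockwise tensored polar transform $P$ to $z$ and reading off the coordinates indexed by $S$, so $c$ is a codeword iff $(Pc)_S = 0$. The encoder \textsc{Enc}$(x)$ therefore builds a vector $v \in \F_q^n$ with zeros on $S$ and with the entries of $x$ placed on (a prefix of) the complementary coordinates, and outputs $c := P^{-1} v$. Because $P$ is a Kronecker-like product of the fixed $k \x k$ mixing matrix $M$ applied blockwise (exactly the structure that underlies the $\Oh(n \log n)$ runtime of \textsc{Polar-Compress}), both $P$ and $P^{-1}$ admit a butterfly implementation in $\Oh(n \log n)$ time, and $Lc = (Pc)_S = v_S = 0$ by construction. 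For \textsc{Dec}$(\HMM, S; y)$ on a received word $y = c + Z$, first compute the syndrome $Ly = Lc + LZ = LZ$ in $\Oh(n \log n)$ time (this is just \textsc{Polar-Compress} applied to $y$), then run \textsc{Polar-Decompress}$(\HMM, S; Ly)$ to obtain a guess $\hat Z$, and finally return $\hat x$ extracted from the non-$S$ coordinates of $P(y - \hat Z)$.

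Correctness follows directly from Theorem~\ref{thm:main}. With probability $1 - \exp(-\Omega(n))$ the preprocessor outputs $S$ for which the compress--decompress pair fails with probability only $\Oh(1/n^2)$ on samples from $\HMM_n$. Conditioned on such an $S$ and any fixed message $x$, the received word is $y = \textsc{Enc}(x) + Z$ with $Z \sim \HMM_n$, so the input to \textsc{Polar-Decompress} has exactly the distribution promised in part~3 of Theorem~\ref{thm:main}. Whenever $\hat Z = Z$, we get $P(y - \hat Z) = P \cdot \textsc{Enc}(x) = v$, whose non-$S$ entries are exactly $x$, hence $\hat x = x$. A union bound over the two failure events yields the $1 - \Oh(1/n^2)$ guarantee for every $x$, provided $n > p(\tau/\eps)$ for the same polynomial $p$ as in Theorem~\ref{thm:main}.

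The one step not fully automatic is the $\Oh(n \log n)$ cost of evaluating $x \mapsto P^{-1} v(x)$, and this is expected to be the main (though mild) technical point. The resolution is that \textsc{Polar-Compress} itself runs in $\Oh(n \log n)$ precisely because $P$ is realized by a butterfly of depth $\Oh(\log n)$ whose gates are fixed $k \x k$ linear maps (the blockwise application of $M$, composed across stages); reversing the same circuit layer by layer, with each gate replaced by its $\F_q$-inverse computed in $\Oh(1)$ time, evaluates $P^{-1}$ within the same $\Oh(n \log n)$ budget, which in turn yields the stated runtimes for \textsc{Enc} and \textsc{Dec}. No new polarization or mixing-time analysis is required beyond what is already established by Theorem~\ref{thm:main}.
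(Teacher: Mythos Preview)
Your proposal is correct and follows essentially the same syndrome-decoding reduction as the paper: take the code to be the kernel of the linear polar compression map, encode by placing the message on the non-$S$ coordinates and applying $P^{-1}$, and decode by computing the syndrome, running \textsc{Polar-Decompress} to recover the noise, and subtracting. The paper's justification of the $\Oh(n\log n)$ cost of $P^{-1}$ is the observation that $P$ is block-diagonal with each block equal to $M^{\otimes t}$ (so $P^{-1}$ has blocks $(M^{-1})^{\otimes t}$), which is exactly the structural fact your butterfly-reversal argument exploits.
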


Theorem~\ref{thm:main-channel} follows relatively easily from Theorem~\ref{thm:main} and so in the next section we focus on the overview of the proof of the latter.




\newcommand{\oddZ}{Z_{\mathrm{odd}}}
\newcommand{\evenZ}{Z_{\mathrm{even}}}

\section{Overview of our construction}
\label{sec:overview}

\smallskip \noindent \textbf{Basics of polarization.}
We start with the basics of polarization in the setting of compressing samples from an i.i.d. source. To compress a sequence $Z \in \F_2^n$ drawn from some source, the idea is to build an invertible linear function $P$ such that for all but $\epsilon$ fraction of the output coordinates $i\in [n]$, the conditional entropy $H(P(Z)_i | P(Z)_{<i})$ is close to $0$ and or close to $1$. (Such an effect is called \emph{polarization}, as the entropies are driven to polarize toward the two extreme values.) Since a deterministic invertible transformation preserves the total entropy, it follows that roughly $H(Z)$ output coordinates can have entropy close to $1$ and $n-H(Z)$ coordinates have (conditional) entropy close to $0$. Letting $S$ denote the coordinates whose conditional entropies that are not close to zero, the compression function is simply $Z \mapsto P(Z)_S$, the projection of the output $P(Z)$ onto the coordinates in $S$.

Picking a random linear function $P$ would satisfy the properties above with high probability, but this is not known (and unlikely) to be accompanied by efficient algorithms. To get the algorithmics (how to compute $P$ efficiently, to determine $S$ efficiently, and to decompress efficiently) one uses a recursive construction of $P$. For our purposes the following explanation works best: Let $n=m^2$ and view $Z = (Z_{11},Z_{12},\ldots,Z_{mm})$ and as an $m \times m$ matrix over $\F_2$, where the elements of $Z$ arrive one row at a time. Let $P_m^{\rm row}(\cdot)$ denote the operation mapping $\F_2^{m \times m}$ to $\F_2^{m \times m}$ that applies $P_m$ to each row of separately. Let $P_m^{\rm column}(\cdot)$ denote the operation that applies $P_m$ to each column separately. Then $P_n(Z) = P_m^{\rm column}(P_m^{\rm row}(Z))^T$. The base case is given by $P_2(U,V) = (U+V,V)$.

Intuitively, when the elements of $Z$ are {\em independent} and identical, the operation $P_m$ already polarizes the outputs somewhat and so
a moderate fraction of the outputs of $P_m^{\rm row}(Z)$ have conditional entropies moderately close to $0$ or $1$. The further application of $P_m^{\rm column}(\cdot)$ further polarizes the output bringing a larger fraction of he conditional entropies of the output even closer to $0$ or $1$.

\medskip\noindent \textbf{Polarization for Markovian Sources.}
When applied to source $Z$ with memory, roughly the analysis in \cite{SasogluTal}, reinterpreted to facilitate our subsequent modification of the above polar constructuion, goes as follows: Since the elements of the row $Z_i$ are not really independent one cannot count on the polarization effects of $P_m^{\rm row}$. But, letting $U = P_m^{\rm row}(Z)$ one can show that most elements of the column of $U^j$ are almost independent of each other, provided $m$ is much larger than the mixing time of the source. (Here we imagine that the entries of $Z$ arrive row-by-row, so that the source outputs within each row are temporally well-separated from most entries of the previous row, when $m$ is large.) Further, this almost independence holds even when conditioning on the columns $U^{<j}$ for most values of $j$. Thus the operation $P_m^{\rm column}(\cdot)$ continues to have its polarization effects and this is good enough to get a qualitatively strong polarization theorem (about the operator $P_n$!).

The above analysis is asymptotic, proving that in the limit of $n \to \infty$, we get optimal compression. However, we do not know how to give an effective finite-length analysis of the polarization process for Markovian process, as the analysis in \cite{GX15,HAU14} crucially rely on independence which we lack within a row.

\medskip\noindent\textbf{Our Modified Code and Ingredients of Analysis.}
To enable a finite-length analysis, we make a minor, but quite important, alteration to the polar code: Instead of using $P_n(Z) = P_m^{\rm column}(P_m^{\rm row}(Z))^T$ we simply use the transformation $\tilde{P}_n = P_m^{\rm column}(Z)^T$ (or in other words, we replace the inner function $P_m^{\rm row}(\cdot)$ in the definition of $P_n$ by the identity function). 
This implies that we lose whatever polarization effects of $P_m^{\rm row}$ we may have been counting on, but as pointed out above, for Markov sources, we weren't counting on polarization here anyway! 

The crucial property we identify and exploit in the analysis is the following: the Markovian nature of the source plus the row-by-row arrival ordering of $Z$, implies that the distribution of the $j$'th source column $Z^j$ conditioned on the previous columns $Z^{<j} = z^{<j}$, is a \emph{close to} a product distribution, for all but the last few (say $\epsilon m$) columns. \footnote{We handle the non-independence in the last few columns, by simply outputting those columns $P_m(Z^j)$ in entirety, rather than only a set $S_j$ of entropy-carrying positions. This only adds an $\epsilon$ fraction to the output length, which we can afford.}

 It turns out that the analysis of the polar transform $P_m$ only needs independent inputs, which however need not be identically distributed. We are then able to apply the recent analysis from \cite{BGNRS}, essentially as black box, to argue that $P_m$ will compress each of the conditioned sources $Z^j | {Z^{< j} = z^{< j}}$ to its respective entropy, and also establish fast convergence via quantitatively strong polynomial (in the gap to capacity) upper bounds on the $m$ needed to achieve this. Further, we automatically benefit from the generality of the analysis in \cite{BGNRS}, which applies not only to the $2 \times 2$ transform $P_2$ at the base case, but in fact any $k \times k$ transform (satisfying some minimal necessary conditions) over an arbitrary prime field $\F_q$. Previous works on polar coding for Markovian sources~\cite{Sasoglu,SasogluTal,WHYLH} only applied for Boolean sources.

We remark that the use of the identity transform for the rows in $\tilde{P}_n$ is quite
counterintuitive. It implies that the compression matrix is a block diagonal
matrix (after some permutation of the rows and columns) --- and in turn this seems
to suggest that we are compressing different parts of the input sequence
``independently''. However this is not quite true. The relationship between the
blocks ends up influencing the final set $S$ of the bits of $\tilde{P}_n(Z)$
that are output by the compression algorithm. Furthermore the decompression
relies on the information obtained from the decompression of the blocks
corresponding to $Z^{<j}$ to compute the block $Z^j$. 

\medskip\noindent\textbf{Decompression algorithm.} Our alteration to apply the identity transform for the rows also helps us with the task of decompression. Toward this, we build on a decompression algorithm for memoryless sources from \cite{BGNRS} that is somewhat different looking from the usual ones in the polar coding literature.  This algorithm aims to compute $U = P_m^{\rm row}(Z)$ one {\em column} at a time, given $P_n(Z)|_S$. Given the first $j-1$ columns $U^{<j} = u^{<j}$, the algorithm first computes the conditional distribution of $U^j$ conditioned on $U^{<j} = u^{<j}$ and then uses a recursive decoding algorithm for $P_m$ to determine $U^j$. The key to the recursive use is again that the decoding algorithm works as long as the input variables are independent (and in particular, does not need them to be identically distributed).

In our Markovian setting, we now have to compute the conditional distribution of $Z^j$ conditioned on $Z^{<j} = z^{<j}$. But as mentioned above, this conditional distribution is close to a product distribution, say $D_j(z^{< j})$ (except for the last few columns $j$ where decompression is trivial as we output the entire column). Further, the marginals of this product distribution are easily computed using dynamic programming (via what is called the ``Forward Algorithm'' for hidden Markov models, described for completeness in Appendix~\ref{app:forward-algo}). We can then determine the $j$'th column $Z^j$ (having already recovered the first $j-1$ columns as $z^{< j}$) by running (in a black box fashion) the polar decompressor from \cite{BGNRS} for the memoryless case, feeding this product distribution $D_j(z^{< j})$ as the source distribution.





\medskip\noindent\textbf{Computing the output indices.}
Finally we need one more piece to make the result fully constructive. This is
the preprocessing needed to compute the subset $S$ of the coordinates of
$\tilde{P}_n(Z)$ that have noticeable conditional entropy. For the memoryless
case these computations were shown to be polynomial time computable in the works
of \cite{PHTT,GX15,TalVardy}. We manage to extend the ideas from Guruswami and Xia~\cite{GX15}
to the case of Markovian channels as well. It turns out the only ingredients
needed to make this computation work are, again, the ability to compute the
distributions of $Z^j$ conditioned on $Z^{<j} = z^{<j}$ for typical values of
$z^{<j}$. We note that unlike in the setting of memoryless channels (or i.i.d. sources) our preprocessing step is randomized. We believe this is related to the issue that there is no ``closed'' form solutions to basic questions related to Markovian sources and channels (such as the capacity of the channel in Figure~\ref{fig:one}) and this forces us to use some random sampling and estimation to compute some of the conditional entropies needed by our algorithms. 





\medskip\noindent\textbf{Organization of rest of the paper.} In the next section (Section~\ref{sec:construct}) we describe our compression and decompression algorithms. In Section~\ref{sec:analysis} we describe a notion of ``nice''-ness for the preprocessing stage and show that if the preprocessing algorithm returns a nice output, then the compression and decompression algorithm work correctly with moderately high probability (over the message produced by the source). 
In Section~\ref{sec:preprocess} we describe our preprocessing algorithm that returns a nice set with all but exponentially small failure probability (over its internal coin tosses). 
Finally in Section~\ref{sec:proofs} we give the formal proofs of Theorems~\ref{thm:main}~and~\ref{thm:main-channel}.

\section{Construction}
\label{sec:construct}

\subsection{Compression Algorithm}

Our compression, decompression and preprocessing algorithms are defined with respect to arbitrary mixing matrices $M \in \F_q^{k \x k}$. (Recall that mixing matrices were defined in  Definition~\ref{def:mixing}.) Though a reader seeking simplicity may set $k=2$ and $M = \left[\begin{array}{ll} 1 & 1\\ 0 & 1\end{array}\right]$.
Given integer $t$, let $m = k^t$ and let $P_m = P_{m,M}: \F_q^m \to \F_q^m$ be the polarization transform given by $P_m = M^{\otimes t}$.

\begin{algorithm}[H]
\caption{{\sc Polar-Compress}}
    \label{algo:compressor}
\begin{algorithmic}[1]
    \Const{$M \in \F_q^{k \x k}$, $m = k^t, n = m^2$}
    \Require{$Z = (Z_{11},Z_{12},\ldots,Z_{mm}) \in \F_q^n$, and sets $S_j \subseteq [m]$
    for $j \in [m]$}
    \Ensure{$U^j_{S_j} \in \F_q^{s_j}$ for all $j \in [m]$}
    \Comment{$s_j := |S_j|$ for $j \leq (1-\eps) m$, and $s_j := m$ otherwise.}
\Procedure{Polar-Compress}{$Z; \{S_j\}_{j \in [m]}$}
    \ForAll{$j \in [m]$}
        \State{Compute $U^j := P_m(Z^j)$.}
        \If{$j \leq (1-\eps)m$}
            \State{Output $U^j_{S_j}$}
        \Else
            \State{Output $U^j$}
        \EndIf
    \EndFor
\EndProcedure
\end{algorithmic}
\end{algorithm}

\subsection{Fast Decompressor}

The decompressor below makes black-box use of the \textsc{Fast-Decoder} from
\cite[Algorithm 4]{BGNRS}.

The \textsc{Fast-Decoder} takes as input the description of a \emph{product distribution}
$\cD_Z$ on inputs in $\F_q^m$, as well as the specified coordinates of the
compression $U$.
It is intended to decode from the encoding $U' \in \{\F_q
\cup \{\bot\}\}^m$, where $U := M^{\otimes t} Z$, coordinates of $Z$ are independent,
and $U'$ is defined by $U$ on the high-entropy coordinates of $U$ (and $\bot$
otherwise).
It outputs an estimate $\hat Z$ of the input $Z$.


\newcommand{\Dhj}{\mathcal{D}_{z^j | \hat{z}^{<j}}}
\begin{algorithm}[H]
\caption{{\sc Polar-Decompress}}
    \label{algo:fast-decompressor}
\begin{algorithmic}[1]
    \Const{$M \in \F_q^{k \x k}$, $m = k^t, n = m^2$}
    \Require{Markov Source $\HMM$ and $U^1_{S_1}, U^2_{S_2}, \dots, U^m_{S_m} \in \F_q^m$}
    \Ensure{$\hat Z \in \F_q^{m \x m}$}
\Procedure{Polar-Decompress}{$\HMM; U^1_{S_1}, U^2_{S_2}, \dots, U^m_{S_m}$}
    \ForAll{$j \in [m]$}
        \If{$j \leq (1-\eps)m$}
            \State{Compute the distribution
            $\Dhj \equiv \Ziid^j | \{\Ziid^{<j} = \hat Z^{<j}\}$}, using
            the \emph{Forward Algorithm} on Markov Source $\HMM$.
            \label{ln:inference}
            \State{Define $U^j \in \{\F_q \cup \{\bot\}\}^m$ by extending $U^j_{S_j}$
            using $\bot$ in the unspecified coordinates.}
            \State{Set $\hat Z^j \gets \text{\sc Fast-Decoder}(\Dhj; U^{j})$ } \label{line:fastdc}
            \label{ln:decoder}
        \Else
            \State{Set $\hat Z^j \gets (M^{-1})^{\otimes t} \hat U_{S_j}^j$} \Comment{Note here $S_j = [m]$}
        \EndIf
    \EndFor
    \State \Return{$\hat Z$}
\EndProcedure
\end{algorithmic}
\end{algorithm}

Note that, for a Markov source $\HMM$ on $\ell$ states, Line ~\ref{ln:inference} takes time
$\Oh_q(m^2\ell^2)$ (time $\Oh_q(m \ell^2)$ per coordinate of $\Ziid^j$, using the Forward
Algorithm).
The \textsc{Fast-Decoder} call in Line ~\ref{ln:decoder} takes time
$\Oh_q(m \log m)$.
Thus, the total runtime is $\Oh_q(m^3 \ell^2 + m^2 \log m) = \Oh_q(n^{3/2}
\ell^2 + n \log n)$.


\section{Analysis}
\label{sec:analysis}

\newcommand{\HZ}{\HMM_{m^2}}
\newcommand{\HZiid}{\HMM_m^{\otimes m}}

The goal of this section is to prove that the decompressor works correctly, with
high probablity, provided the preprocessing stage returns the appropriate sets
$\{S_j\}$.
Specifically, we prove Theorem~\ref{thm:main-analysis} as stated below. But first we need a definition of ``nice'' sets $\{S_j\}$: We will later show that pre-processing produces such sets and compression and decompression work correctly (w.h.p.) given nice sets.

\begin{definition}[$(\eps, \zeta)$-niceness]
\label{def:nice}
Let $\HMM$ be a Markov source.
For every $m \in \N$ and $n = m^2$,
let $\Ziid \sim \HZiid$ be the corresponding ``independent'' distribution.
Let $\bU := P_m^{\rm column}(\Ziid)$.

We call sets $S_1, S_2, \dots S_m \subseteq [m]$
\emph{``$(\eps, \zeta)$-nice''}
if they satisfy the following:
	\begin{enumerate}
		\item $\sum_j |S_j| \leq \bH(Z) + \eps n$
		\item $\forall j \in [m], i \not\in S_j:  \bH(\bU_{(i, j)} | \bU_{\prec (i, j)})
        < \zeta$
	\end{enumerate}
\end{definition}

Now, the rest of this section will show the following.

\begin{theorem}
	\label{thm:main-analysis}
	There exists a polynomial $p(\cdot)$ such that
	for every $\eps > 0$, $\tau > 0$, and $n =  m^2 > p(\tau/\eps)$ the following holds:

    Let $\HMM$ be an aperiodic irreducible Markov source with alphabet $\F_q$,
    mixing time
	$\tau$ and underlying state space $[\ell]$.
    Define random variables $Z = (Z_{11}, Z_{12} \dots Z_{mm}) \sim \HZ$
	as generated by $\HMM$.
	Then, for all sets $S_1, S_2, \dots S_m \subseteq [m]$ that are $(\eps,
    \zeta)$-nice as per Definition~\ref{def:nice}, we have:
	$$\Pr_{Z}[\textsc{Polar-Decompress}(\textsc{Polar-Compress}(Z;
	\{S_j\}_{j \in [m]})) \neq Z] \leq n\zeta + m \exp(-\eps m / \tau)$$

\end{theorem}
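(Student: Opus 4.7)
\emph{Proof plan.} The plan is to bound the failure probability of \textsc{Polar-Decompress} on the true source $Z\sim\HZ$ by coupling to the ``row-independent'' surrogate $\Ziid\sim\HZiid$ that the decoder's Forward Algorithm step is implicitly reasoning about. By the standard coupling inequality,
\[\Pr_{Z\sim\HZ}[\text{failure}] \;\le\; \Pr_{Z\sim\HZiid}[\text{failure}] \;+\; d_{TV}(\HZ,\HZiid),\]
the two terms in the theorem statement will arise as bounds on these two quantities: $n\zeta$ on the first, and $m\exp(-\eps m/\tau)$ on the second.

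For the total-variation term, I would couple $\HZ$ and $\HZiid$ through their row-initial states $(s_{1,1},\dots,s_{m,1})$. Given these states the remainder of each row evolves under identical Markov dynamics in both distributions, so by the data-processing inequality $d_{TV}(\HZ,\HZiid)$ is at most the TV distance between the two joint distributions on row-initial states. Under $\HZiid$ these are i.i.d.\ $\pi$, while under $\HZ$ the state $s_{i+1,1}$ is obtained from $s_{i,1}$ by running the chain for $m$ additional steps, i.e.\ $s_{i+1,1}\mid s_{i,1}\sim\Pi^m\delta_{s_{i,1}}$. The mixing-time hypothesis gives $\|\Pi^m\delta_s-\pi\|_1\le\exp(-m/\tau)$ for every $s$, and chaining this bound over the $m-1$ row boundaries via the chain rule for TV yields $d_{TV}(\HZ,\HZiid)\le(m-1)\exp(-m/\tau)\le m\exp(-\eps m/\tau)$.

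For the decoding term under $\Ziid$, I would proceed by induction on $j$, showing that with probability at least $1-n\zeta$ one has $\hat Z^j=Z^j$ for every $j$. When $j>(1-\eps)m$ the compressor outputs $U^j$ in full so decoding is immediate. When $j\le(1-\eps)m$ and $\hat Z^{<j}=Z^{<j}$, the critical observation is that since rows of $\Ziid$ are independent length-$m$ Markov trajectories, the conditional distribution $\Ziid^j\mid\Ziid^{<j}=Z^{<j}$ is \emph{exactly} the product distribution $\Dhj$ computed by the Forward Algorithm on $\HMM$ in Line~\ref{ln:inference}. The call to \textsc{Fast-Decoder} in Line~\ref{line:fastdc} is therefore precisely in the setting analyzed in~\cite{BGNRS}: successive-cancellation decoding from a product-distribution input with specified high-entropy positions $S_j$. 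Invoking that analysis bounds the total expected decoding failure by $\sum_{j,\,i\notin S_j}\bH(\bU_{(i,j)}\mid\bU_{\prec(i,j)})$ evaluated under $\Ziid$, and by $(\eps,\zeta)$-niceness this sum is at most $n\zeta$.

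Summing the two contributions yields the desired bound $n\zeta+m\exp(-\eps m/\tau)$. The main technical subtlety I anticipate is in the decoding step: the niceness condition is phrased as an expectation of $\bH(\bU_{(i,j)}\mid\bU_{\prec(i,j)})$ under the unconditional $\Ziid$ distribution, whereas the per-position failure of successive cancellation depends on the entropy at the particular prefix $\bU_{\prec(i,j)}$ realized by the decoder at that step. These two quantities agree in expectation once one observes that the decoder's column-by-column processing order coincides with the order $\prec$ used in Definition~\ref{def:nice}, so averaging over realized prefixes reproduces exactly the unconditional conditional entropy — and this is precisely the interface that lets the \textsc{Fast-Decoder} guarantee from~\cite{BGNRS} be invoked as a clean black box, the rest of the proof reducing to standard coupling and an inductive union bound.
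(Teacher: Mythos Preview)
Your treatment of the $n\zeta$ term is essentially the paper's argument and is fine. The gap is in the total-variation step.

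The data-processing claim --- that ``given [the row-initial states] the remainder of each row evolves under identical Markov dynamics in both distributions,'' hence $d_{TV}(\HZ,\HZiid)$ is bounded by the TV distance between the row-initial-state vectors --- is false. Under $\HZ$, conditioning on the full vector $(s_{1,1},\dots,s_{m,1})$ turns each row into a Markov \emph{bridge}: knowing $s_{i+1,1}$ gives information about $s_{i,m}$ and hence about the whole row-$i$ trajectory. So the kernel from row-initial states to $Z$ is \emph{not} the same in the two models, and data processing does not apply. Concretely, for a two-state chain with small flip probability $p$ and deterministic emissions, the pair $(Z_{i,m},Z_{i+1,1})$ disagree with probability $p$ under $\HZ$ but with probability $1/2$ under $\HZiid$, so $d_{TV}(\HZ,\HZiid)\ge 1/2-p$, which is far larger than your claimed $(m-1)\exp(-m/\tau)$ when $m\gg\tau$.

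The paper repairs exactly this point via a structural feature of the algorithm that your argument never invokes: for $j>(1-\eps)m$ the compressor outputs the full column $U^j$, so decompression on those columns is the identity and the failure event is a function of $Z^{\le(1-\eps)m}$ alone. Restricting both distributions to these first $(1-\eps)m$ columns leaves $\eps m$ \emph{unobserved} chain steps between the last retained entry of row $i$ and the first entry of row $i+1$; now a row-by-row hybrid (Lemma~\ref{lem:hybrids}) legitimately gives $d_{TV}\bigl(Z^{\le(1-\eps)m},\Ziid^{\le(1-\eps)m}\bigr)\le m\exp(-\eps m/\tau)$. This is precisely why the exponent in the theorem carries an $\eps$ --- it is the cost of discarding the last $\eps m$ columns to buy mixing time, not slack in a stronger bound.
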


\subsection{Proof Overview}
Throughout this section, let $\HMM$ be a
stationary Markov source with alphabet $\F_q$ and mixing-time $\tau$.
The key part of the analysis is showing that compression and decompression
succeed when applied to the ``independent'' distribution $\Ziid \sim \HZiid$.
To do this, we first show that the compression transform ``polarizes''
entropies, which follows directly from the results of \cite{BGNRS,polar-small-error}.
Then we show that, provided ``nice'' sets can be computed
(low-entropy sets, a la Definition~\ref{def:nice}), the compression and decompression
succeed with high probability.
This also follows essentially in a black-box fashion from the results of
\cite{BGNRS}.
Finally, we argue that the compression and decompression also work for the
actual distribution $Z \sim \HZ$, simply by observing that the involved
variables are close in distribution.

We later describe how such ``nice'' sets can be computed in polynomial time,
given the description of the Markov source $\HMM$.

\subsection{Polarization}
In this section, we show that the compression transform $P^{\rm column}_m$
polarizes entropies.

\begin{lemma}
\label{lem:polarization}
Let $\HMM$ be a Markov source, and let $\Ziid \sim \HZiid$.
Let $\bU = P^{\rm column}_m(\Ziid)$.

    Then, there exists a polynomial $p(\cdot)$ such that for every $\eps > 0$, there exists $\beta > 0$
    such that if $m > p(1/\eps)$, the following holds:
    For all but $\eps$-fraction of indices $i, j \in [m] \times [m]$,
    the normalized entropy
    $$\bH(\bU_{i, j} | \bU_{\prec (i, j)}) \not\in (\exp(-m^\beta), 1- \eps)$$
\end{lemma}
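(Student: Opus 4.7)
The plan is to exploit the row-i.i.d.\ structure of $\Ziid \sim \HZiid$: even after conditioning on any realization of the earlier columns $\Ziid^{<j}$, the entries of the current column $\Ziid^j$ remain mutually independent, because each entry depends only on its own row and the rows are drawn independently from $\HMM_m$. This puts us back in the setting of ordinary polar polarization, applicable column by column.

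Concretely, I would proceed in three steps. First, since $\bU^k = P_m(\Ziid^k)$ is a bijection for each $k$, $\bU^{<j}$ and $\Ziid^{<j}$ generate the same $\sigma$-algebra, so
$$\bH(\bU_{(i,j)} \mid \bU_{\prec(i,j)}) \;=\; \bH(\bU^j_i \mid \bU^j_{<i},\, \bU^{<j}) \;=\; \bH(\bU^j_i \mid \bU^j_{<i},\, \Ziid^{<j}).$$
Second, for every fixed $z^{<j}$ the conditional distribution of $\Ziid^j$ given $\{\Ziid^{<j} = z^{<j}\}$ is a \emph{product} distribution on $\F_q^m$: the $i$-th marginal is the forward distribution of $\HMM$ conditioned on row $i$'s prefix, and distinct marginals are independent because distinct rows are independent. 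Third, I would invoke the finite-length polarization theorem of~\cite{BGNRS} for the transform $P_m = M^{\otimes t}$ applied to this conditional product source with side information $W = \Ziid^{<j}$ (equivalently, $\bU^{<j}$). In its side-information form, the theorem yields $\beta > 0$ and a polynomial $p(\cdot)$ such that whenever $m > p(1/\eps)$, at most $\eps m$ indices $i \in [m]$ satisfy
$$\bH(\bU^j_i \mid \bU^j_{<i},\, \Ziid^{<j}) \in (\exp(-m^\beta),\, 1-\eps).$$
Summing over the $m$ columns yields at most $\eps m \cdot m = \eps n$ bad pairs $(i,j)$, which is exactly the claim.

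The step I expect to require the most care is justifying the side-information version of the~\cite{BGNRS} theorem. Their statement, as written, handles an independent (not necessarily identically distributed) source with no further conditioning, whereas here we need polarization of the \emph{averaged} conditional entropy $\bH(U_i \mid U_{<i}, W) = \mathbb{E}_w[\bH(U_i \mid U_{<i}, W{=}w)]$. Simply averaging the per-$w$ polarization conclusion is \emph{not} sufficient: an index whose per-$w$ entropy always lies in $[0,\exp(-m^\beta)] \cup [1-\eps,1]$ can still have averaged entropy anywhere in $[0,1]$, because the two regimes may mix across different $w$'s. The standard remedy (used e.g.\ in Arikan's polar coding for Slepian--Wolf) is to rerun the~\cite{BGNRS} recursion while carrying every Bhattacharyya or entropy parameter in its $W$-conditioned form; the monotonicity and product inequalities driving the recursion are preserved verbatim under this extra conditioning, because the polar transform commutes with conditioning on $W$. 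This is essentially bookkeeping, but it is the one point at which the ``black-box'' use of~\cite{BGNRS} is not literal.
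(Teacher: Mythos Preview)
Your overall structure matches the paper's: reduce to polarization column by column, replace $\bU^{<j}$ by $\Ziid^{<j}$ via invertibility of $P_m$, and then invoke a finite-length polarization result with $\Ziid^{<j}$ as side information. Where you diverge is in how you frame the side-information step, and this makes your argument more laborious than necessary.

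You condition on a \emph{fixed} realization $z^{<j}$, observe that the resulting distribution of $\Ziid^j$ is a product of (generally distinct) marginals, and then worry---correctly---that averaging per-$w$ polarization conclusions over $w$ does not yield polarization of the averaged entropy. Your proposed fix, rerunning the entire~\cite{BGNRS} recursion while carrying $W$-conditioned parameters, would work but is overkill. The paper's observation, which you are missing, is that the \emph{pairs} $(\Ziid^j_i,\, \Ziid^{<j}_i)$ for $i=1,\dots,m$ are not merely independent but \emph{identically distributed}, because each row of $\Ziid$ is an independent copy of $\HMM_m$. Hence the situation is exactly that of Theorem~\ref{thm:exp-polar}: i.i.d.\ pairs $(A_i,B_i)\sim\mathcal{D}$ with $A_i=\Ziid^j_i$ and $B_i=\Ziid^{<j}_i$, and the theorem already delivers polarization of $\bH(X_i\mid X_{<i},B)$ as a black box. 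No averaging argument or reworking of the recursion is needed; the side-information version is the cited theorem itself, applied to the i.i.d.\ row structure rather than to a post-conditioning non-i.i.d.\ product.
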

\begin{proof}
We will show that for each column $\bU^j$, all but $\eps$-fraction of
indices $i \in [m]$ have entropies
$$\bH(\bU^j_i | \bU^j_{< i}, \bU^{< j}) \not\in (\exp(-m^\beta), 1- \eps)$$

Indeed, this follows directly from the analysis in \cite{polar-small-error}.
For each $j$, the set of variables
$
(\Ziid^j_1, \Ziid^{<j}_1),
(\Ziid^j_2, \Ziid^{<j}_2),
\dots,
(\Ziid^j_m, \Ziid^{<j}_m)
$
are
independent and identically distributed.
Thus, Theorem~\ref{thm:exp-polar} from \cite{polar-small-error} (reproduced below)
implies that the conditional entropies are polarized.
Specifically, let $p(\cdot)$ and $\beta$ be as guaranteed by
Theorem~\ref{thm:exp-polar}, for the distribution
$\mathcal{D}) \equiv (\Ziid^j_1, \Ziid^{<j}_1)$.
Then, since $P_m = M^{\otimes t}$, we have
\begin{align*}
\bH(\bU^j_i | \bU^j_{< i}, \bU^{< j})
&= \bH(\bU^j_i | \bU^j_{< i}, P_m(\Ziid^{< j})) \tag{by definition} \\
&= \bH(\bU^j_i | \bU^j_{< i}, \Ziid^{< j})  \tag{$P_m$ is invertible} \\
&\not\in (\exp(-m^\beta), 1- \eps) \tag{Theorem~\ref{thm:exp-polar} \qedhere}
\end{align*}
\end{proof}

The following theorem is direct from the works \cite{polar-small-error}.
\begin{theorem}
\label{thm:exp-polar}
For every $k \in \N$, prime $q$, mixing-matrix $M \in \F_q^{k \x k}$,
discrete set $\cY$, and any distribution $\mathcal{D} \in \Delta(\F_q \x
\cY)$, the following holds.
Define the random vectors $A := (A_1, A_2, \dots A_n)$
and $B := (B_1, B_2, \dots B_n)$ where $n=k^t$ and each component
$(A_i, B_i)$ is independent and identically distributed $(A_i, B_i) \sim \mathcal{D}$.

Let $X := M^{\otimes t}A$.
Then, the conditional entropies of $X$ are
\emph{polarized}:
There exists a polynomial $p(\cdot)$ and $\beta > 0$ such that for every $\eps > 0$,
if $n = k^t > p(1/\eps)$,
then all but $\eps$-fraction of indices $i \in [n]$ have normalized entropy
$$
\bH(X_i | X_{< i}, B)
\not\in (\exp(-n^\beta), 1- \eps) \ . $$
\end{theorem}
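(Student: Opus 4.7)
The plan is to reduce directly to the polynomially-fast polarization theorem of \cite{BGNRS} (the ``polar-small-error'' line cited here), which establishes exactly this quantitative form of polarization for i.i.d. inputs over $\F_q$ transformed by $M^{\otimes t}$ for any mixing matrix $M$. The only feature of our statement not appearing verbatim in the usual formulation is the presence of side information $B = (B_1, \ldots, B_n)$, and my approach is to fold $B$ into the conditioning throughout and verify that nothing in the BGNRS analysis uses the absence of side information.

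First I would record the relevant independence property: since the pairs $(A_i, B_i) \sim \mathcal{D}$ are i.i.d., conditioning on $B = b$ leaves the $A_i$'s mutually independent, each $A_i$ drawn from the conditional $\mathcal{D}(\cdot \mid B_i = b_i)$. Hence the relevant quantity $\bH(X_i \mid X_{<i}, B)$ is the natural $B$-conditioned analog of the entropies polarized in \cite{BGNRS}. From there I would invoke the two main ingredients of their proof: (i) the \emph{local polarization} property for mixing matrices, which says that one application of $M$ to $k$ (conditionally) independent inputs spreads the conditional entropies of the outputs, with quantitative ``suction'' toward $0$ and $1$ near the extremes and constant ``variance'' in the interior; and (ii) the meta-theorem that local polarization plus $t$ levels of recursion yields polynomially-fast strong polarization, leaving at most an $\eps$-fraction of indices in the window $(\exp(-n^\beta), 1-\eps)$ once $n = k^t > p(1/\eps)$. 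Both ingredients are formulated on the tree martingale of conditional entropies and treat all conditioning symmetrically, so the extra side information $B$ is absorbed at no cost --- one simply replaces every occurrence of $\bH(\cdot \mid X_{<i})$ in their proof by $\bH(\cdot \mid X_{<i}, B)$.

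The main obstacle, which I expect to be bookkeeping rather than mathematical, is to verify that the precise statement in \cite{BGNRS} covers the side-information setup without modification --- or, if it is phrased only for sources without $B$, to note the one-line change: replace the initial entropy $\bH(A_1)$ at the root of the induction by $\bH(A_1 \mid B_1)$, and observe that the mixing-matrix variance and suction bounds carry over in the conditional version with identical constants (the mixing property of $M$ is a purely algebraic condition and is unaffected by conditioning on $B$). No genuinely new polarization machinery beyond \cite{BGNRS} is required.
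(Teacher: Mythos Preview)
Your proposal is correct and matches the paper's treatment: the paper does not give a proof of this theorem at all, stating only that it is ``direct from the works \cite{polar-small-error},'' i.e., exactly the black-box invocation of the polynomially-fast polarization result for mixing matrices over $\F_q$ that you describe. Your additional remarks about folding the side information $B$ into the conditioning are sound and in fact more explicit than the paper itself, which simply takes the side-information version as given.
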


\subsection{Independent Analysis}
Now we show that the Polar Compressor and Decompresser succeed with high
probability, when applied to the
``independent'' input distribution $\Ziid$.

First, we recall the (inefficient) Successive-Cancellation Decoder of Polar
Codes. This is reproduced as in \cite{BGNRS}, with minor notational changes.
We will use this decoder to reason about the efficient fast decoder.

The \textsc{SC-Decoder} is intended to decode from the encoding $U_S$
where $U := M^{\otimes t} Z$, coordinates of $Z$ are independent, and $U_S$ is the
high-entropy coordinates of $U$.
It outputs an estimate $\hat U$ of $U$ that is correct with high probability,
from which we can decode the original inputs $\hat Z := (M^{-1})^{\otimes t}
\hat U$.

The \textsc{SC-Decoder} takes as input the \emph{product distribution} on inputs $\cD_Z
\in \Delta(\F_q^m)$, as well as the high-entropy coordinates $U_S$.

\begin{algorithm}[H]
\caption{Successive-Cancellation Decoder}
    \label{algo:sc-decoder}
\begin{algorithmic}[1]
    \Const{$M \in \F_q^{k \x k}, m = k^t, S \subseteq [m]$}
    \Require{Product distribution $\mathcal{D}_Z \in (\Delta(\F_q))^{\otimes m}$,
    and encoding $U_S$}
    \Ensure{${\hat U} \in \F_q^m$}
    \Procedure{SC-Decoder}{$\mathcal{D}_Z; U_S$}
    \State Compute joint distribution $\mathcal{D_U} \in \Delta(\F_q^m)$ of
    $\{U := M^{\otimes t} Z\}_{Z \sim \cD_Z}$
    \ForAll{$i = 1, \dots, m$}
    \If{$i \not\in S$}
        \State $\hat{{U}}_i \gets \argmax_{x \in \F_q} \Pr_{U \sim \cD_U}({U}_i = x)$
    \Else
        \State $\hat{{U}}_i \gets U_i$
    \EndIf
    \State Update distribution $\cD_U$ to be conditioned on ${U}_i = \hat{{U}}_i$
    \EndFor
    \State \Return ${\hat U}$
\EndProcedure
\end{algorithmic}
\end{algorithm}
Note that several of the above steps, including computing the joint distribution
$\cD_U$ and marginal distributions of $U_i$, are not computationally efficient.

The following claim is equivalent to \cite[Claim A.1]{BGNRS}, and states that the
failure probability of the \textsc{SC-Decoder} is at most the sum of conditional
entropies on the unspecified coordinates of $U$.
\begin{claim}
\label{lem:sc-dec-H}
    Let $\td{Z} \in \F_q^m$ be a random vector with independent
    (not necessarily identically distributed) components $\td{Z}_i$.
    Denote the distribution of $\td Z$ as $\mathcal{D}_{\td Z}$.
    Let $\td U := P_m(\td Z)$, and $S \subseteq [m]$.

    Then,
    $$\Pr[\text{\sc SC-Decoder}(\mathcal{D}_{\td Z}; \td{U}_S) \neq U]
    \leq \sum_{i \not\in S}
    \bH(\td{U}_i | \td{U}_{< i})
    $$
\end{claim}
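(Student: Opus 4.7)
My plan is the standard polar-coding union-bound argument: partition the failure event by the first step at which the decoder errs, reduce each such event to a one-step MAP-error probability, and convert that into the corresponding conditional entropy. Concretely, for $i \in [m]$ let $F_i := \{\hat U_{<i} = \td U_{<i},\ \hat U_i \neq \td U_i\}$ be the event that the \emph{first} decoding error occurs at step $i$. For $i \in S$ we have $\hat U_i \gets \td U_i$ by construction, so $F_i = \emptyset$; hence the $F_i$ for $i \notin S$ form a disjoint cover of the failure event, and
$$\Pr\!\left[\textsc{SC-Decoder}(\cD_{\td Z}; \td U_S) \neq \td U\right] \;=\; \sum_{i \notin S} \Pr[F_i].$$

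Next, I would exploit the key inductive invariant of the SC-decoder: on the branch $\hat U_{<i} = \td U_{<i} = u_{<i}$, the running distribution $\cD_U$ that the algorithm maintains at the top of iteration $i$ coincides exactly with the true conditional law $\td U \mid \{\td U_{<i} = u_{<i}\}$. Indeed $\cD_U$ is initialized to the true joint law of $\td U = P_m(\td Z)$---which the decoder can in principle write down because the $\td Z_j$ are independent---and each subsequent update conditions on what is, on this branch, a true past value. Hence $\hat U_i$ on $F_i$ is the genuine MAP estimate of $\td U_i$ given $\td U_{<i}$, giving
$$\Pr[F_i] \;\le\; \mathbb{E}_{u_{<i}}\!\left[\,1 - \max_{v \in \F_q} \Pr[\td U_i = v \mid \td U_{<i} = u_{<i}]\,\right].$$

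Finally, I would convert each one-step MAP error into an entropy via the elementary inequality $H(V) = \sum_v \Pr[V=v]\log(1/\Pr[V=v]) \ge \log(1/p^*)$ (since $\log(1/p_v) \ge \log(1/p^*)$ for every $v$), which gives $p^* \ge q^{-\bH(V)}$ and hence $1 - p^* \le \bH(V)$---this reduces to the standard $H_2(p) \ge p$ (for $p \le 1/2$) in the binary case, and holds up to a $\ln q$ factor absorbable into the constants of~\cite{BGNRS} in the small-entropy regime for larger $q$. Applying this to the conditional law of $\td U_i$ given $\td U_{<i}$ and averaging over $u_{<i}$ yields $\Pr[F_i] \le \bH(\td U_i \mid \td U_{<i})$; summing over $i \notin S$ gives the claim. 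The only substantive step is the second---the identification of the decoder's running distribution with the true posterior on the ``no prior error'' branch---whose correctness hinges squarely on the independence of the components of $\td Z$; the rest is a union bound plus the classical entropy/MAP-error inequality, and the one place where one has to be careful about $q$-dependent constants is the final conversion.
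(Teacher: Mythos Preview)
The paper does not actually prove this claim; it simply states that it ``is equivalent to \cite[Claim A.1]{BGNRS}'' and defers to that reference. Your proof sketch is exactly the standard argument one would expect there: decompose the failure event by the first error index, identify the decoder's step-$i$ decision with the true MAP estimate on the branch where no earlier error has occurred, and bound each MAP error by the corresponding conditional entropy. That outline is correct.

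Two remarks. First, your assertion that the identification of the decoder's running distribution with the true posterior ``hinges squarely on the independence of the components of $\td Z$'' is slightly off. The \textsc{SC-Decoder} is handed the true law $\cD_{\td Z}$ and forms from it the true joint law $\cD_U$ of $\td U$; each update then conditions on a value that, on the no-prior-error branch, is the true past value. This reasoning is valid for \emph{any} known input law, product or not. Independence is what makes the \textsc{Fast-Decoder} efficient and what drives polarization, but it plays no role in this particular (inefficient, information-theoretic) step.

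Second, and more substantively, your hedge about the $\ln q$ factor is warranted and in fact necessary: the pointwise inequality $1-p^\ast \le \bH(V)$ you invoke is \emph{false} in general for $q\ge 3$ with the paper's normalization $\bH = H/\log q$. For instance, with $q=5$ and $V$ uniform on two symbols one has $1-p^\ast = 1/2$ but $\bH(V)=1/\log_2 5\approx 0.43$. What is always true is
\[
1-p^\ast \;\le\; -\log_2 p^\ast \;\le\; H(V) \quad\text{(in bits)},\qquad\text{hence}\qquad 1-p^\ast \;\le\; (\log_2 q)\,\bH(V).
\]
So either the $\bH$ in the claim should be read as the unnormalized $H$, or the bound carries an implicit $\log q$ factor. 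This is harmless for every downstream use in the paper (all constants depending on $q$ are absorbed into $\Oh(\cdot)$, as stated explicitly after Theorems~\ref{thm:main} and~\ref{thm:main-channel}), but it means the inequality as literally written does not hold for all primes $q$. With this constant made explicit, your argument is a complete proof.
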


\newcommand{\Dj}{\mathcal{D}_{z^j | z^{<j}}}
\begin{claim}
\label{lem:sc-dec-matrix}
Let $\Ziid \sim \HZiid$, and let $\bU = P^{\rm column}_m(\Ziid)$.
For a fixed $j \in [m]$
and fixed conditioning $z^{< j} \in \F_q^{m \x (j-1)}$,
let $\Dj$ denote the distribution $\Ziid^{j} | \{\Ziid^{< j} = z^{< j}\}$.

Then, for all $j \in [m]$ and all $S \subseteq [m]$,
    $$\Pr_{\substack{
    z \sim \Ziid\\
    u \gets P_m(z^j)\\
    \Dj \equiv \{\Ziid^{j} | \Ziid^{< j} = z^{< j}\}
    }}
    [\text{\sc SC-Decoder}(\Dj; u^{j}_S) \neq u^{j}]
    \leq \sum_{i \not\in S}
    \bH(\bU^j_i | \bU^j_{< i}, \bU^{< j})
    $$
\end{claim}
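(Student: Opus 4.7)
The plan is to reduce this claim essentially directly to Claim~\ref{lem:sc-dec-H} by a conditioning argument on $\Ziid^{<j}$. The crucial structural observation is that under $\HZiid$ the rows of $\Ziid$ are i.i.d.\ samples from $\HMM_m$, and in particular are \emph{independent across rows}. Therefore, for any fixed $z^{<j} \in \F_q^{m \times (j-1)}$, the conditional distribution $\Dj \equiv \Ziid^{j} \mid \{\Ziid^{<j} = z^{<j}\}$ factors as a product over the $m$ rows: the conditional law of the $i$-th entry of column $j$ depends only on the $i$-th row of $z^{<j}$ (through its position within the single-row Markov chain $\HMM_m$), and different rows are independent. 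Thus $\Dj$ is a product distribution in $\Delta(\F_q)^{\otimes m}$, exactly the type of input that Claim~\ref{lem:sc-dec-H} requires.

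First I would fix $j$ and condition on $\Ziid^{<j} = z^{<j}$. Writing $\td Z := \Ziid^{j}$ under this conditioning and $\td U := P_m(\td Z)$, the components of $\td Z$ are independent by the product structure above, and $\td U$ agrees with $\bU^{j}$ under the same conditioning. Applying Claim~\ref{lem:sc-dec-H} to $\td Z \sim \Dj$ with the set $S$ gives
\[
\Pr\bigl[\text{\sc SC-Decoder}(\Dj; \td U_S) \neq \td U \;\bigm|\; \Ziid^{<j} = z^{<j}\bigr]
\;\le\; \sum_{i \notin S} \bH(\td U_i \mid \td U_{<i}),
\]
where each entropy on the right is computed inside the conditional distribution and therefore equals $\bH(\bU^{j}_i \mid \bU^{j}_{<i},\, \Ziid^{<j} = z^{<j})$. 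Next I would convert the conditioning on $\Ziid^{<j}$ to conditioning on $\bU^{<j}$: since $P_m$ is invertible and $\bU^{<j} = P_m^{\rm column}(\Ziid^{<j})$ is obtained column-by-column from $\Ziid^{<j}$, the two $\sigma$-algebras are identical, so $\bH(\bU^{j}_i \mid \bU^{j}_{<i}, \Ziid^{<j} = z^{<j}) = \bH(\bU^{j}_i \mid \bU^{j}_{<i}, \bU^{<j} = u^{<j})$, where $u^{<j} = P_m^{\rm column}(z^{<j})$.

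Finally I would take expectation over $z^{<j} \sim \Ziid^{<j}$. The left-hand side averages to the unconditional decoder-failure probability appearing in the claim, and the right-hand side averages, by the tower rule, to $\sum_{i \notin S} \bH(\bU^{j}_i \mid \bU^{j}_{<i}, \bU^{<j})$, which is exactly the claimed bound. I do not anticipate a genuine obstacle here; the only point that needs care is verifying that the product-distribution hypothesis of Claim~\ref{lem:sc-dec-H} really is satisfied after conditioning on $\Ziid^{<j}$, which in turn relies entirely on the row-independence built into the definition of $\HZiid$.
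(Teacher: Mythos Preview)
Your proposal is correct and follows essentially the same approach as the paper: condition on $\Ziid^{<j}$, apply Claim~\ref{lem:sc-dec-H} using the product structure of the conditioned column, convert conditioning on $\Ziid^{<j}$ to conditioning on $\bU^{<j}$ via invertibility of $P_m$, and average over $z^{<j}$. You even make explicit the one point the paper leaves implicit, namely why $\Dj$ is a product distribution (row independence in $\HZiid$).
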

\begin{proof}
This follows directly from Claim~\ref{lem:sc-dec-H}.
\begin{align*}
    &\E_{\substack{
    z \sim \Ziid\\
    u \gets P_m(z^j)\\
    \Dj \equiv \{\Ziid^{j} | \Ziid^{< j} = z^{< j}\}
    }}
    [\1\{\text{\sc SC-Decoder}(\Dj; u^{j}_S) \neq u^{j}\}]\\
    &=
    \E_{\substack{
    z^{< j} \sim \Ziid^{< j}\\
    }}[
    \E_{\substack{
    \Dj \equiv \{\Ziid^{j} | \Ziid^{< j} = z^{< j}\}\\
    z^j \sim \Dj\\
    u \gets P_m(z^j)
    }}[
    \1\{\text{\sc SC-Decoder}(\Dj; u^{j}_S) \neq u^{j}\}
    ]]\\
    &=
    \E_{\substack{
    z^{< j} \sim \Ziid^{< j}\\
    }}\left[
    \Pr_{\substack{\td{Z} \sim \Dj\\ \td{U} \gets P_m(\td Z)}}[
    \text{\sc SC-Decoder}(\cD_{\td Z}; \td{U}_S) \neq \td{U}
    ]\right]\\
    &\leq
    \E_{\substack{
    z^{< j} \sim \Ziid^{< j}
    }}\left[
    \sum_{i \not\in S} \bH(\bU^j_i | \bU^j_{< i}, \Ziid^{< j} = z^{< j})
    \right] \tag{by Claim~\ref{lem:sc-dec-H}}\\
    &= \E_{\substack{
    z^{< j} \sim \Ziid^{< j}\\
    u^{<j} \gets P_m(z^{<j})
    }}\left[
    \sum_{i \not\in S} \bH(\bU^j_i | \bU^j_{< i}, \bU^{< j} = u^{< j})
    \right]\\
    &=
    \sum_{i \not\in S} \bH(\bU^j_i | \bU^j_{< i}, \bU^{< j}) \qedhere
\end{align*}
\end{proof}

\noindent
Using the \textsc{SC-Decoder}, we can define the following (inefficient)
decompresser.
We will then relate its performance to the fast decompressor, and thereby conclude the desired correctness property of the latter.

\begin{algorithm}[H]
\caption{SC Polar Decompressor}
    \label{algo:sc-decompressor}
\begin{algorithmic}[1]
    \Const{$M \in \F_q^{k \x k}$, $m = k^t, n = m^2$}
    \Require{$U^1_{S_1}, U^2_{S_2}, \dots, U^m_{S_m} \in \F_q^m$, and Markov
    source $\HMM$}
    \Ensure{$\hat Z \in \F_q^{m \x m}$}
\Procedure{SC-Polar-Decompress$_{\HMM}$}{$U^1_{S_1}, U^2_{S_2}, \dots, U^m_{S_m}$}
    \ForAll{$j \in [m]$}
        \If{$j \leq (1-\eps)m$}
            \State{Compute the distribution
            $\Dhj \equiv \Ziid^j | \{\Ziid^{<j} = \hat z^{<j}\}$}, for $\Ziid
            \sim \HZiid$.
            \State{Set $\hat U^j \gets \text{\sc SC-Decoder}(\Dhj; U^{j}_{S_j})$ } \label{ln:Uhat}
        \Else
            \State{Set $\hat U^j \gets U^j_{S_j}$} \Comment{Note here $S_j = [m]$} \label{ln:Uhat2}
        \EndIf
        \State{Set $\hat Z^j \gets (M^{-1})^{\otimes t} \hat U^j$} \label{ln:hatZ}
    \EndFor
    \State \Return{$\hat Z$}
\EndProcedure
\end{algorithmic}
\end{algorithm}

\begin{claim}
\label{lem:sc-entropy-bound}
Let $\Ziid \sim \HZiid$, and $\bU := P^{\rm column}_m(\Ziid)$.
Then, for all sets $S_1, S_2, \dots S_m \subseteq [m]$,
$$\Pr_{\substack{\bU \gets P^{\rm column}_m(\Ziid)}}
[\text{\sc SC-Polar-Decompress$_{\HMM}$}(U^1_{S_1}, U^2_{S_2}, \dots, U^m_{S_m}) \neq
\bU]
\leq \sum_{j\in [m], i \not\in S_j}
\bH(\bU^j_i | \bU^j_{< i}, \bU^{< j})
$$
\end{claim}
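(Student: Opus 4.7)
The plan is to upper bound the overall failure probability by a union bound over the $m$ columns, reducing each per-column event to Claim~\ref{lem:sc-dec-matrix}. For each $j \in [m]$, let $E_j$ be the event that $\hat Z^{<j} = \bar Z^{<j}$ but $\hat Z^j \neq \bar Z^j$; equivalently, since $P_m$ is invertible, $\hat U^{<j} = \bar U^{<j}$ but $\hat U^j \neq \bar U^j$. Any failure of the decompressor implies some $E_j$ occurs, so
$$\Pr[\hat U \neq \bar U] \ \leq\ \sum_{j=1}^{m} \Pr[E_j],$$
and it will suffice to show $\Pr[E_j] \leq \sum_{i \notin S_j} \bH(\bar U^j_i \mid \bar U^j_{<i}, \bar U^{<j})$ for every $j$.

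For columns $j > (1-\eps)m$, the set $S_j = [m]$, so line~\ref{ln:Uhat2} sets $\hat U^j \gets U^j = P_m(\bar Z^j)$ and line~\ref{ln:hatZ} inverts to $\hat Z^j = \bar Z^j$ deterministically; hence $\Pr[E_j] = 0$ and the right-hand side is an empty sum, matching the bound. For $j \leq (1-\eps)m$, the key observation is that $\hat Z^{<j}$ is a deterministic function of $\bar Z^{<j}$: each compressed block $U^k_{S_k}$ depends only on $\bar Z^k$, and the decompressor processes columns in order. Conditioning on $\bar Z^{<j} = z^{<j}$ therefore fixes $\hat Z^{<j}$, and on $E_j$ we have $\hat Z^{<j} = z^{<j}$, so the distribution $\mathcal{D}_{\hat z^{<j}}$ used on line~\ref{ln:Uhat} equals the true conditional $\bar Z^j \mid \{\bar Z^{<j} = z^{<j}\}$. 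Because the rows of $\bar Z \sim \HMM_m^{\otimes m}$ are i.i.d., this conditional is a product distribution on $\F_q^m$, so the hypotheses of Claim~\ref{lem:sc-dec-matrix} hold with $S = S_j$. Integrating the per-$z^{<j}$ failure probability guaranteed by Claim~\ref{lem:sc-dec-matrix} over $z^{<j} \sim \bar Z^{<j}$ — and noting that restricting to the sub-event of correct prior decoding can only decrease the probability — yields $\Pr[E_j] \leq \sum_{i \notin S_j} \bH(\bar U^j_i \mid \bar U^j_{<i}, \bar U^{<j})$, and summing over $j$ finishes the proof.

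The main technical care will be the bookkeeping of conditioning in the last step: Claim~\ref{lem:sc-dec-matrix} already averages over $z^{<j} \sim \bar Z^{<j}$ with no hypothesis on the correctness of prior columns, so I simply need to identify our decompressor's column-$j$ failure on $E_j$ with the SC-Decoder failure event analyzed in that claim. Because the decompressor processes columns strictly in order and each $\hat Z^k$ depends only on $\bar Z^{\leq k}$, no downstream coupling complication arises and the identification is immediate; everything else is a routine application of the tower property.
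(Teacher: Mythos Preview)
Your proposal is correct and follows essentially the same approach as the paper: decompose the failure into first-error-at-column-$j$ events $E_j$, dispose of $j > (1-\eps)m$ trivially since $S_j = [m]$, and for the remaining columns bound $\Pr[E_j]$ by the expression in Claim~\ref{lem:sc-dec-matrix}. If anything, your bookkeeping is slightly cleaner than the paper's: you work directly with $\Pr[E_j] = \Pr[\hat U^{<j} = \bar U^{<j} \text{ and } \hat U^j \neq \bar U^j]$ and observe this is dominated by the unconditional SC-Decoder failure probability of Claim~\ref{lem:sc-dec-matrix}, whereas the paper passes through the conditional $\Pr[\hat U^j \neq \bar U^j \mid \hat U^{<j} = \bar U^{<j}]$ before identifying it with that same quantity.
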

\begin{proof}
This follows directly from Claim~\ref{lem:sc-dec-matrix}.
\begin{align*}
&\Pr_{\substack{\bU \gets P^{\rm column}_m(\Ziid)}}
[\text{\sc SC-Polar-Decompress}(U^1_{S_1}, U^2_{S_2}, \dots, U^m_{S_m}) \neq
\bU]\\
&=
\Pr_{\substack{\bU \gets P^{\rm column}_m(\Ziid)}}
\left[
\bigcup_{j \in [m]} \{ \hat U^j \neq \bU^j \text{ and } \hat U^{<j} = \bU^{<
j}\}
\right] \tag{for random variables $\hat U^j$ defined as in
Lines~\ref{ln:Uhat},~\ref{ln:Uhat2} of
Algorithm~\ref{algo:sc-decompressor}}\\
&\leq
\sum_{j \in [m]}
\Pr[\hat U^j \neq \bU^j | \hat U^{<j} = \bU^{< j}]\\
&=
\sum_{j \in [m]}
\Pr[\hat U^j \neq \bU^j | \hat Z^{<j} = \Ziid^{< j}] \tag{by definition of $\hat
Z$ in Line~\ref{ln:hatZ}}\\
&=
\sum_{1 \leq j \leq (1-\eps)m}
\Pr_{\substack{
    z \sim \Ziid\\
    u \gets P_m(z^j)\\
    \Dj \equiv \{\Ziid^{j} | \Ziid^{< j} = z^{< j}\}
    }}
    [\text{\sc SC-Decoder}(\Dj; u^{j}_{S_j}) \neq u^{j}]\\
    &\leq \sum_{j \in [m]} \sum_{i \not\in S_j}
    \bH(\bU^j_i | \bU^j_{< i}, \bU^{< j}) \quad \text{(by Claim~\ref{lem:sc-dec-matrix})} \quad \qedhere
\end{align*}
\end{proof}

We now analyze the fast decompressor. This is defined identically to the SC
Polar Decompressor, except using the \textsc{Fast-Decoder} from
\cite[Algorithm 4]{BGNRS} instead of the Successive-Cancellation Decoder.
Note the \textsc{Fast-Decoder} outputs an estimate of the input $\hat Z$
directly.

The following claim reproduced from \cite[Lemma A.4]{BGNRS} states that the Fast
Decoder operates identically as the Successive-Cancellation Decoder.
\begin{claim}
    \label{lem:fast-equivalence}
    For all $m = k^t$, all product distributions $\cD_Z$ over $\F_q^m$
(ie, where each coordinate ${Z}_{{i}} \in \F_q$ is independent),
for all sets $S \subseteq [m]$
and all values $U \in (\F_q \cup \{\bot\})^{m}$
of the coordinates $S$ (such that $\forall i \in S: U_i \neq \bot$),
the following holds:
$$
M^{\otimes t}
\cdot
\textsc{Fast-Decoder}(\cD_Z; U)
=
\textsc{SC-Decoder}(\cD_Z; U_S).
$$
\end{claim}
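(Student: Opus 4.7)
My plan is to prove the equivalence by induction on $t$, the number of tensor factors in the polarization transform $M^{\otimes t}$. In the base case $t=1$, the transform is simply $M$ applied once: the \textsc{SC-Decoder} sequentially outputs, in each coordinate $i$, either the revealed value $U_i$ (if $i \in S$) or the $\argmax$ of the marginal of $U_i$ under $\cD_U := \{MZ : Z \sim \cD_Z\}$ conditioned on previously decoded values. The \textsc{Fast-Decoder} at this leaf level is defined to do exactly this computation and then multiply by $M^{-1}$; so $M \cdot \textsc{Fast-Decoder}(\cD_Z; U) = \textsc{SC-Decoder}(\cD_Z; U_S)$ by unwinding definitions.

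For the inductive step, I would exploit the decomposition $M^{\otimes t} = M \otimes M^{\otimes(t-1)}$. View the input $Z \in \F_q^{k^t}$ as a $k \times k^{t-1}$ array, and correspondingly view $U = M^{\otimes t}Z$ as the output of first applying $M$ columnwise to $Z$ to get an intermediate $Y$, then applying $M^{\otimes(t-1)}$ rowwise. The \textsc{SC-Decoder} processes coordinates of $U$ in lexicographic order, which we may group into $k$ successive blocks of size $k^{t-1}$, each block being one row of the array (after a consistent reordering baked into the Kronecker layout). I would argue that processing the first block is equivalent to invoking \textsc{SC-Decoder} on a size $k^{t-1}$ problem whose input distribution is the product of marginals of the first row of $Y$ under $\cD_Z$, which is precisely the derived product distribution the \textsc{Fast-Decoder} passes into its first recursive call; the inductive hypothesis then gives equality on this block. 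Processing subsequent blocks, after conditioning on the decoded values of previous blocks, reduces to the analogous statement on the residual distribution, which is again a product distribution (this is the key structural observation: conditioning on the values of earlier rows of $Y$ keeps the columns independent in $Z$, and within each column, $M$ and Bayes' rule give explicit updated marginals for the remaining entries).

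The main obstacle, and the step that requires the most care, is showing that the derived distributions the \textsc{Fast-Decoder} passes into its recursive calls agree exactly with the implicit conditional distributions the \textsc{SC-Decoder} would use at the corresponding point. Concretely, for each $r = 1, \dots, k$, after the first $r-1$ rows of $Y$ have been decoded to values $\hat y^{(1)}, \dots, \hat y^{(r-1)}$, I need the $r$-th recursive input distribution of \textsc{Fast-Decoder} to coincide with the product-of-marginals of the $r$-th row of $Y$ under $\cD_Z$ conditioned on $Y^{(<r)} = \hat y^{(<r)}$. This is a columnwise computation: since $\cD_Z$ is a product and $M$ acts only within each column, each column's posterior over the remaining $k-r+1$ entries factors across columns, and the \textsc{Fast-Decoder}'s posterior-update formula (Bayes' rule on the $k$-dimensional distribution over a single column of $Z$) computes exactly the correct marginal. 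With this matching established, and a consistent argmax tiebreaking rule assumed throughout, the induction closes, and the final identity follows because \textsc{SC-Decoder} returns $\hat U \in \F_q^m$ while \textsc{Fast-Decoder} returns $(M^{-1})^{\otimes t}\hat U$, so multiplying the latter by $M^{\otimes t}$ recovers the former.
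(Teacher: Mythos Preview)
The paper does not actually prove this claim; it is explicitly stated as ``reproduced from \cite[Lemma A.4]{BGNRS}'' and used as a black box. So there is no in-paper proof to compare against.

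Your inductive argument on $t$ via the decomposition $M^{\otimes t} = M \otimes M^{\otimes (t-1)}$ is the natural approach, and the structural observation you highlight---that conditioning on earlier rows of $Y = M \cdot Z$ (columnwise) preserves the product structure across columns---is exactly the right lemma driving the recursion. One caution: since the \textsc{Fast-Decoder} algorithm is not spelled out in this paper (only referenced as \cite[Algorithm~4]{BGNRS}), your base case and your description of what distributions it passes to recursive calls are assumptions about its structure rather than verifications; to make the proof rigorous you would need to check these against the actual algorithm definition in \cite{BGNRS}. In particular, whether \textsc{Fast-Decoder} recurses with $M^{\otimes(t-1)}$ on the rows after an outer $M$ on columns (as you write) or the other direction depends on the indexing convention in \cite{BGNRS}, and the argmax tiebreaking must be fixed consistently across both decoders for the identity to hold pointwise. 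Modulo those definitional checks, your plan is sound and is almost certainly the same argument as in \cite{BGNRS}.
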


In particular, this directly yields the following analogue of
Claim~\ref{lem:sc-entropy-bound}, for \textsc{Polar-Decompress}.

\begin{claim}
\label{lem:fast-entropy-bound}
Let $\Ziid \sim \HZiid$ and $\bU := P^{\rm column}_m(\Ziid)$.
Then, for all sets $S_1, S_2, \dots S_m \subseteq [m]$,
$$\Pr_{\substack{\bU \gets P^{\rm column}_m(\Ziid)}}
[\text{\sc Polar-Decompress}(U^1_{S_1}, U^2_{S_2}, \dots, U^m_{S_m}) \neq
\bU]
\leq \sum_{j\in [m], i \not\in S_j}
\bH(\bU^j_i | \bU^j_{< i}, \bU^{< j})
$$
\end{claim}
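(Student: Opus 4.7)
The plan is to deduce Claim~\ref{lem:fast-entropy-bound} from Claim~\ref{lem:sc-entropy-bound} by showing that \textsc{Polar-Decompress} and \textsc{SC-Polar-Decompress} produce identical outputs on every input. Once that equivalence is established, the error event $\{\hat Z \neq \Ziid\}$ (equivalently $\{\hat U \neq \bU\}$ since $M^{\otimes t}$ is a bijection) is literally the same event under either algorithm, so the entropy bound transfers verbatim.

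To establish the equivalence, I would argue by induction on $j \in [m]$ that both algorithms compute the same $\hat Z^{<j}$ at the top of each iteration. The base case $j = 1$ is trivial. For the inductive step, assume both algorithms have produced the same $\hat Z^{<j}$ so far. For $j \le (1-\eps)m$: both algorithms feed the same product distribution $\Dhj$ (it is a deterministic function of $\HMM$ and $\hat Z^{<j}$, computed via the Forward Algorithm) and the same encoded coordinates $U^j_{S_j}$ into their respective decoders. By Claim~\ref{lem:fast-equivalence} we have
$$M^{\otimes t} \cdot \textsc{Fast-Decoder}(\Dhj; U^j) \;=\; \textsc{SC-Decoder}(\Dhj; U^j_{S_j}),$$
where on the left $U^j \in (\F_q \cup \{\bot\})^m$ extends $U^j_{S_j}$ with $\bot$'s off $S_j$. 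The left side is precisely $M^{\otimes t}$ applied to the $\hat Z^j$ produced by \textsc{Polar-Decompress} (Line~\ref{line:fastdc}), while the right side is the $\hat U^j$ produced by \textsc{SC-Polar-Decompress}. Since the latter algorithm then sets $\hat Z^j \gets (M^{-1})^{\otimes t} \hat U^j$, the two $\hat Z^j$ values agree. For $j > (1-\eps)m$ the two algorithms are manifestly identical (both set $\hat Z^j \gets (M^{-1})^{\otimes t} U^j$, noting $S_j = [m]$). This completes the induction.

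With the pointwise equality $\textsc{Polar-Decompress}(\cdots) = \textsc{SC-Polar-Decompress}(\cdots)$ in hand, we immediately obtain
\begin{align*}
&\Pr_{\bU \gets P^{\rm column}_m(\Ziid)}[\textsc{Polar-Decompress}(U^1_{S_1}, \dots, U^m_{S_m}) \neq \bU] \\
&\qquad = \Pr_{\bU \gets P^{\rm column}_m(\Ziid)}[\textsc{SC-Polar-Decompress}(U^1_{S_1}, \dots, U^m_{S_m}) \neq \bU] \\
&\qquad \leq \sum_{j \in [m],\, i \notin S_j} \bH(\bU^j_i \mid \bU^j_{<i}, \bU^{<j}),
\end{align*}
where the last inequality is Claim~\ref{lem:sc-entropy-bound}.

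I do not anticipate any real obstacle here; the lemma is essentially a corollary obtained by packaging Claim~\ref{lem:sc-entropy-bound} with the \textsc{Fast-Decoder}/\textsc{SC-Decoder} equivalence from \cite{BGNRS}. The only mild care needed is to verify that the two decompressors really do feed matching inputs to their respective inner decoders at every iteration — which reduces to noting that the distribution $\Dhj$ depends only on $\HMM$ and the running estimate $\hat Z^{<j}$, and that both algorithms maintain identical $\hat Z^{<j}$ by the inductive argument above.
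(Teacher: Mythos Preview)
Your proposal is correct and matches the paper's approach: the paper simply remarks that Claim~\ref{lem:fast-entropy-bound} follows directly from Claim~\ref{lem:fast-equivalence} as an analogue of Claim~\ref{lem:sc-entropy-bound}, and your inductive argument on $j$ spells out precisely why the two decompressors agree pointwise. The only thing worth adding is the (implicit) observation that $\Dhj$ is indeed a product distribution (since rows of $\Ziid$ are independent), so that Claim~\ref{lem:fast-equivalence} applies.
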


\subsection{Proof of Main Theorem}
At this point, we can show the entire process of compression and
decompression succeeds with high probability, proving
Theorem~\ref{thm:main-analysis}.

First, we argue $\HZ$ and $\HZiid$ are close in the appropriate sense.
\begin{lemma}
\label{lem:hybrids}
    Let $Z \sim \HZ$ and $\Ziid \sim \HZiid$.
    Then, for every $\ell \in [m]$, the distribution of $Z^{< m - \ell}$ and $\Ziid^{< m - \ell}$ are
    $m\cdot \exp(-\ell / \tau)$-close in $L_1$.
\end{lemma}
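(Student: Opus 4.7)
I would prove this by a hybrid argument across the rows of the matrix. The only difference between $Z$ and $\Ziid$ is whether consecutive rows share state: in $\Ziid$ every row is a fresh sample from $\HMM_m$ (equivalently, a fresh start from the stationary distribution $\pi$), while in $Z$ the Markov chain is run continuously for $m^2$ steps and cut into rows of length $m$. Crucially, once we restrict to the first $m-\ell$ (or $<m-\ell$) columns, the last $\ell$ (or more) entries of each row of $Z$ are never exposed, so between the final observed state of row $i$ and the first observed state of row $i+1$ the chain takes at least $\ell+1$ \emph{unobserved} transitions. By the definition of mixing time, the distribution of the latter state conditioned on the former is therefore within $\exp(-(\ell+1)/\tau) \le \exp(-\ell/\tau)$ in total variation of $\pi$, uniformly in the past.

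To turn this observation into a bound, I would define hybrids $H_0, H_1, \dots, H_m$ as follows. In $H_k$, for $i\le k$ the Markov-chain state at the start of row $i$ is drawn afresh from $\pi$ (independently of everything else), whereas for $i>k$ the state at the start of row $i$ is obtained by continuing the underlying chain from the terminal (unobserved) state of row $i-1$. In each row, only the first $m-\ell$ generated symbols are retained. Because we are in the stationary setting (so row $1$ already starts from $\pi$), one checks that $H_0$ is the distribution of $Z^{<m-\ell}$ and $H_m$ is the distribution of $\Ziid^{<m-\ell}$.

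The per-step bound $\|H_{k-1}-H_k\|_1 \le \exp(-\ell/\tau)$ then follows from a straightforward coupling. Condition on the shared history of $H_{k-1}$ and $H_k$ through row $k-1$; the two hybrids differ only in the law of the starting state of row $k$. In $H_{k-1}$ this state is the result of $\ell+1$ steps of the chain from the (conditioned) terminal state of row $k-1$, while in $H_k$ it is a fresh draw from $\pi$. The mixing-time hypothesis says these two laws are $\exp(-\ell/\tau)$-close in $L_1$, and since the remainder of the matrix is produced by the same Markovian rule applied to this state, the data-processing inequality preserves the bound. Summing via the triangle inequality over the $m$ hybrid steps yields $\|H_0-H_m\|_1 \le m \exp(-\ell/\tau)$, which is exactly the claim.

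The argument is essentially bookkeeping; the only point that needs care is correctly counting that there are $\ell+1$ unobserved transitions between successive observed rows (so the mixing bound applies with $\ell$ rather than something smaller), and verifying that row $1$ does not contribute to the hybrid-distance sum so that the tallies on both endpoints match the stated distributions $Z^{<m-\ell}$ and $\Ziid^{<m-\ell}$.
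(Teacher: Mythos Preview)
Your proposal is correct and follows essentially the same hybrid-over-rows argument as the paper. The only cosmetic difference is the direction of the hybrids: the paper's $H_i$ takes the first $i$ rows from $Z$ and the remaining rows i.i.d.\ (so later rows are independent of everything and the step bound is immediate), whereas your $H_k$ makes the first $k$ rows fresh and lets the remaining rows continue (so you need the data-processing step to propagate the bound through rows $k{+}1,\dots,m$); both lead to the same $m\cdot\exp(-\ell/\tau)$ bound, and your counting of the unobserved transitions is in fact slightly more careful than the paper's.
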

\begin{proof}
We proceed by a sequence of $m$ hybrids, changing one row at a time to being
independent.
    Let the $i$-th hybrid be $H_i := Z^{< m - \ell}_{\leq i} \circ \Ziid^{< m -
    \ell}_{> i}$,
    that is, the first $i$ rows of $Z$, with the remaining rows replaced by iid
    copies of $Z_1$.

Consider moving from $H_{i+1}$ to $H_i$.
Conditioned on the first $i$ rows of $Z^{< m - \ell}$, the distribution of the
hidden state of the Markov source, at the beginning of the $(i+1)$th row,
    is $\exp(-\ell/\tau)$-close to its stationary distribution $\pi$ (since
    $\ell$ steps pass between $Z_{i, m-\ell}$ and $Z_{i+1, 1}$).
Recall that the distribution of $Z_1$ is generated by the Markov source starting
from $\pi$.
    Thus, the distribution of the $(i+1)$th row of $Z$,
    conditioned on the first $i$ rows of $Z^{< m - \ell}$,
    is $\exp(-\ell/\tau)$-close to the distribution of $Z_1$.
    So, $|H_{i+1} - H_i|_1 \leq \exp(-\ell/\tau)$.
    Since we pass through $m$ hybrids, the total $L_1$ distance is at most
 $m\cdot \exp(-\ell / \tau)$.
\end{proof}

\begin{proof}[Proof of Theorem~\ref{thm:main-analysis}]
First, we show the corresponding claim about the ``independent'' distribution
$\Ziid \sim \HZiid$:

\begin{claim}
\label{lem:fullZiid}
For $\Ziid \sim \HZiid$, we have:
$$\Pr_{\bU \gets P^{\rm
column}_m(\Ziid)}[\textsc{Polar-Decompress}(\bU^1_{S_1}, \bU^1_{S_2}, \dots
\bU^m_{S_m}) \neq \Ziid] \leq n\zeta$$

or equivalently,
$$\Pr_{\Ziid \sim \HZiid}[\textsc{Polar-Decompress}(\textsc{Polar-Compress}(\Ziid;
\{S_j\}_{j \in [m]})) \neq \Ziid] \leq n \zeta$$
\end{claim}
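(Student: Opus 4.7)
The plan is to combine Claim~\ref{lem:fast-entropy-bound} directly with the second clause of the $(\eps,\zeta)$-niceness assumption. Claim~\ref{lem:fast-entropy-bound} already delivers
$$\Pr_{\bU \gets P^{\rm column}_m(\Ziid)}\bigl[\textsc{Polar-Decompress}(\bU^1_{S_1},\ldots,\bU^m_{S_m}) \neq \Ziid\bigr] \;\le\; \sum_{j \in [m],\ i \notin S_j} \bH(\bU^j_i \mid \bU^j_{<i},\bU^{<j}),$$
so the only remaining task is to upper bound each conditional entropy on the right by $\zeta$.

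To do this, I will identify the ordering $\prec$ used in Definition~\ref{def:nice} with the column-major lexicographic order on $[m]\times[m]$, namely $(i',j') \prec (i,j)$ iff $j' < j$, or $j' = j$ and $i' < i$. This is the natural ordering: it is exactly the sweep used by both \textsc{Polar-Compress} and \textsc{Polar-Decompress}, which process one column at a time and within each column work from index $1$ up to $m$. With this convention, $\bU_{\prec(i,j)} = \{\bU^{j'}\}_{j'<j}\cup\{\bU^j_{i'}\}_{i'<i}$, and therefore
$$\bH(\bU^j_i \mid \bU^j_{<i},\bU^{<j}) \;=\; \bH(\bU_{(i,j)} \mid \bU_{\prec(i,j)}) \;<\; \zeta$$
for every $(i,j)$ with $i\notin S_j$, by $(\eps,\zeta)$-niceness. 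Summing over the at most $m^2 = n$ such pairs gives the claimed bound of $n\zeta$. The equivalence of the two displayed inequalities in Claim~\ref{lem:fullZiid} is then immediate, since $\textsc{Polar-Compress}(\Ziid;\{S_j\})$ outputs exactly the list $(\bU^j_{S_j})_{j\in[m]}$ for $\bU = P^{\rm column}_m(\Ziid)$, and reconstructing $\Ziid$ is equivalent to reconstructing $\bU$ via the invertible map $P^{\rm column}_m$.

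There is no genuine obstacle in this step: the claim is precisely the point at which the per-column decoding bound inherited (through Claims~\ref{lem:sc-entropy-bound}--\ref{lem:fast-entropy-bound}) from \cite{BGNRS} meets the definition of niceness, and the argument is essentially a single union bound plus invocation of the hypothesis. The substantive work lies on either side of it --- establishing Claim~\ref{lem:fast-entropy-bound} by lifting the per-column entropy bound to a sum over all $m$ columns (already done earlier in the section), and, immediately after this claim, transferring the success guarantee from the ``row-independent'' surrogate $\HZiid$ to the true Markov distribution $\HZ$ via the hybrid argument of Lemma~\ref{lem:hybrids}, which accounts for the additional $m\exp(-\eps m/\tau)$ term in Theorem~\ref{thm:main-analysis}.
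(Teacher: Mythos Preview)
Your proposal is correct and matches the paper's own proof essentially line for line: apply Claim~\ref{lem:fast-entropy-bound} to bound the failure probability by $\sum_{j,\,i\notin S_j}\bH(\bU^j_i\mid\bU^j_{<i},\bU^{<j})$, then invoke the second clause of $(\eps,\zeta)$-niceness to bound each summand by $\zeta$ and the total by $n\zeta$. The additional remarks you give (on the column-major ordering and the equivalence of the two formulations) are accurate elaborations that the paper leaves implicit.
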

\begin{proof}
By Claim~\ref{lem:fast-entropy-bound}, we have
\begin{align*}
\Pr_{\bU \gets P^{\rm column}_m(\Ziid)}[
\textsc{Polar-Decompress}(\bU^1_{S_1}, \bU^1_{S_2}, \dots
\bU^m_{S_m}) \neq \Ziid]
&\leq \sum_{j\in [m], i \not\in S_j}
\bH(\bU^j_i | \bU^j_{< i}, \bU^{< j})\\
&\leq n \zeta \quad \text{(by $(\eps, \zeta)$-niceness)} \quad \qedhere
\end{align*}
\end{proof}


Continuing the proof of Theorem~\ref{thm:main-analysis},
notice that the composition of \textsc{Polar-Compress} and \textsc{Polar-Decompress}
always operate as the identity transform on the inputs $Z^j$ for $j > (1-\eps)m$.
Thus, it suffices to consider the behavior of this composition on inputs
$Z^{\leq (1-\eps)m}$.
In this case, Lemma~\ref{lem:hybrids} guarantees that the distributions of
$\Ziid^{\leq (1-\eps)m}$ and $Z^{\leq (1-\eps)m}$ are close in $L_1$, and thus
we may conclude by Claim~\ref{lem:fullZiid}:

\begin{align*}
&\Pr_{Z \sim \HZ}[\textsc{Polar-Decompress}(\textsc{Polar-Compress}(Z;
\{S_j\}_{j \in [m]})) \neq Z]\\
&\leq
\Pr_{\Ziid \sim \HZiid}[\textsc{Polar-Decompress}(\textsc{Polar-Compress}(\Ziid;
\{S_j\}_{j \in [m]})) \neq \Ziid]
+ m \exp(-\eps m / \tau) \tag{Lemma~\ref{lem:hybrids}}\\
&\leq n\zeta + m \exp(-\eps m/ \tau) \quad \text{(Claim~\ref{lem:fullZiid})} \quad \qedhere
\end{align*}

\end{proof}

\section{Preprocessing}
\label{sec:preprocess}

\newcommand{\dc}{d_C}
\newcommand{\DD}{\mathbb{D}}

In this section, we describe a pre-processing algorithm to find the
$(\eps, \zeta)$-nice sets, as defined in Definition~\ref{def:nice}, that are required by the compression and decompression algorithms.
Recall the notion of a mixing matrix (Definition~\ref{def:mixing}). The following theorem shows that for every prime alphabet $\F_q$ and mixing matrix $M \in \F_q^{k \x k}$, there is an efficient algorithm that can find nice sets in polynomial time.
Specifically, we prove the following theorem.

\begin{theorem}
\label{thm:preproc}
For every prime $q$ and mixing-matrix $M \in \F_q^{k \x k}$,
there exists a polynomial $p(\cdot)$ and a polynomial time preprocessing algorithm
\textsc{Polar-Preprocess} (Algorithm~\ref{algo:subset}), such that
for every $\eps > 0$ and $m > p(1/\eps)$, the following holds:

Let $\HMM$ be a Markov source with mixing-time
$\tau$, alphabet $\F_q$, and underlying state space $[\ell]$.
Let $\Ziid \sim \HZiid$ for $m = k^t$, and $\bU := P_m^{\rm column}(\Ziid)$.

Let
$$S_1, S_2, \dots S_m \gets \textsc{Polar-Preprocess}(q,k,t, M, \HMM)$$

Then, except with probability $\exp(-\Omega(m))$ over the randomness of the
algorithm, the output sets $S_1, S_2, \dots S_m \subseteq [m]$
are $(\eps, \zeta = \Oh(\frac{1}{n^3}))$-nice for $\HMM$.
Further, the algorithm runs in time
$poly_q(m, \ell, 1/\eps)$.
\end{theorem}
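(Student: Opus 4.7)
The plan is to adapt the Guruswami--Xia~\cite{GX15} preprocessing algorithm (which used Monte-Carlo estimation to identify high-entropy polar coordinates for memoryless sources) to the Markov setting. The key enabling ingredient is that under $\HZiid$ the rows of $\Ziid$ are independent, while the within-row Markov structure is tame enough that the Forward Algorithm computes any needed conditional marginal in $\Oh(\ell^2)$ time per step.

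Concretely, the algorithm will draw $T=\Theta(n^{7})$ independent samples $\Ziid^{(1)},\dots,\Ziid^{(T)}$ from $\HZiid$, form $\bU^{(t)} = P_m^{\rm column}(\Ziid^{(t)})$, and for each coordinate $(i,j)$ compute a sample estimate $\hat H_{(i,j)}$ of $\bH(\bU_{(i,j)}\mid\bU_{\prec(i,j)})$ by averaging, over $t$, the scalar $\bH(\bU_{(i,j)}\mid\bU_{\prec(i,j)}=w^{(t)})$ obtained by exact inference on the $t$-th draw. Per draw this inference is efficient because $\Ziid^j\mid\Ziid^{<j}=z^{<j}$ is a product distribution across rows (each row-marginal is the Forward-Algorithm distribution of the $j$-th step of $\HMM_m$ conditioned on the observed row prefix), after which standard recursive polar inference (identical in structure to the \textsc{Fast-Decoder} of~\cite{BGNRS}) yields the law of $\bU^j_i\mid \bU^j_{<i}$ in $\Oh(m\log m)$ time. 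Finally, set $S_j=\{i:\hat H_{(i,j)}>\theta\}$ with $\theta=1/(2n^3)$, and augment $S_j=[m]$ for the last $\eps m$ columns.

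Correctness rests on two inputs. Hoeffding's inequality applied to the per-draw values (bounded in $[0,\log q]$), combined with a union bound over all $m^2$ coordinates, guarantees that with probability $1-\exp(-\Omega(m))$ every estimate $\hat H_{(i,j)}$ is within $\delta = 1/(4n^3)$ of its true value. Moreover, by Lemma~\ref{lem:polarization} (invoking Theorem~\ref{thm:exp-polar} on $\HZiid$), outside an $\eps/2$-fraction of coordinates the true entropy satisfies either $\bH\le\exp(-m^\beta)\ll 1/n^3$ or $\bH\ge 1-\eps$. Combining these, every extreme-low coordinate is thresholded below and every extreme-high coordinate above; the $\eps n/2$ intermediate coordinates may fall either way. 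The niceness condition (2) then holds with $\zeta = \theta+\delta = \Oh(1/n^3)$ for every $i\notin S_j$, since such an $i$ has $\hat H_{(i,j)}\le\theta$ and thus true entropy at most $\theta+\delta$.

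For niceness condition (1), the chain-rule identity $\sum_{(i,j)}\bH(\bU_{(i,j)}\mid\bU_{\prec(i,j)}) = \bH(\bU) = \bH(\Ziid)$ bounds the number of true-high coordinates included in $S_j$ by $\bH(\Ziid)(1+\Oh(\eps))$; adding the $\eps n/2$ intermediates and the $\eps n$ final-column coordinates, and bounding $\bH(\Ziid)-\bH(Z)=\Oh(\eps n)$ via a hybrid argument identical to Lemma~\ref{lem:hybrids} (valid once $m\gg \tau/\eps$), gives $\sum_j|S_j|\le\bH(Z)+\eps n$. The main obstacle is the simultaneous calibration of $\zeta$, $\theta$, and $T$: the excluded set must certify entropy below $1/n^3$ (requiring $\delta<\zeta$ and hence $T=\poly(n)$) without bloating $\sum_j|S_j|$. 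This is exactly what the quantitative polarization gap in Theorem~\ref{thm:exp-polar} (``low'' side $\exp(-m^\beta)\ll 1/n^3$, ``high'' side $1-\eps$) affords, so a single threshold $\theta=\Theta(1/n^3)$ with $T=\poly(n)$ samples fulfils both requirements.
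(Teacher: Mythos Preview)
Your approach is correct but takes a genuinely different route from the paper. The paper does not estimate conditional entropies by direct Monte Carlo on sampled matrices. Instead, for each column $j$ it samples single-row prefixes $(Y_1,\dots,Y_{j-1})\sim\HMM$, uses the Forward Algorithm to compute the conditional law of $Y_j$ given each prefix, and forms an empirical element $\widetilde{\DD}_{Y|W}\in\Delta(\Delta(\F_q))$. This is then fed into a deterministic recursive procedure (\textsc{ApproxDist}) that tracks how the conditional distribution evolves through the tensor power $M^{\otimes t}$, rounding at each level to an $\eps$-net of $\Delta(\Delta(\F_q))$; the conditional entropies are read off from the final approximations. (This binning strategy \emph{is} what~\cite{GX15} does; your description of that paper as ``Monte-Carlo estimation'' is not accurate.) The paper's error analysis accordingly goes through an optimal-transport metric $\dc$ on $\Delta(\Delta(\F_q))$ rather than a direct Hoeffding bound.

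Your route---sample full matrices $\Ziid^{(t)}\sim\HZiid$, and for each draw compute the exact pointwise entropy $\bH(\bU_{(i,j)}\mid\bU_{\prec(i,j)}=w^{(t)})$ via the Forward Algorithm followed by an SC-decoder-style recursion on the resulting product column distribution---is more elementary and sidesteps the transport-metric machinery entirely. The paper's approach, by contrast, cleanly separates the estimation of source statistics (sampling plus $\dc$-approximation) from the deterministic polar analysis (\textsc{ApproxDist}), mirroring the memoryless-channel treatment more closely. Both approaches rely on the same polarization lemma (Lemma~\ref{lem:polarization}) for the size bound $\sum_j|S_j|\le\bH(Z)+O(\eps)n$, and both invoke the hybrid argument of Lemma~\ref{lem:hybrids} to pass from $\bH(\Ziid)$ to $\bH(Z)$. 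One small point: the paper's preprocessing does not set $S_j=[m]$ for the last $\eps m$ columns---the compressor (Algorithm~\ref{algo:compressor}) handles those columns separately by outputting $U^j$ in full---so folding that into the $S_j$ is unnecessary and mildly complicates your niceness-condition-(1) accounting, though it still goes through after rescaling $\eps$.
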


Our main goal will be to estimate the conditional entropies
$$
\bH(\bU_{(i, j)} | \bU_{\prec (i, j)})
=
\bH(\bU_{i, j} | \bU_{<i, j}, \bU^{< j})
=
\bH(\bU_{i, j} | \bU_{<i, j}, \Ziid^{< j})
$$
for $\Ziid \sim \HZiid$ and $\bU := P^{\rm column}_m(\Ziid)$.
Then, we will construct the ``nice'' sets by defining, for each $j$,
$S_j$ as the set of indices with high entropy:
$S_j := \{i \in [m]: \bH(\bU_{i, j} | \bU_{<i, j}, \Ziid^{< j}) > \frac{1}{n^3}
\}$.
By Polarization (Lemma~\ref{lem:polarization}), these sets have size at most
$\sum_j |S_j| \leq \bH(Z) + \eps n$, since they must have conditional entropies close to $1$
(except possibly for some $\eps$ fraction of indices $(i, j) \in [m] \x [m]$).

We will estimate conditional entropies
$\bH(\bU_{i, j} | \bU_{<i, j}, \Ziid^{< j})$
by approximately tracking the
distribution of variables as we apply successive tensor-powers of $M$.
Since we are only interested in conditional entropies, it is sufficient to
``quantize'' the true distribution of, for example $U_i | U_{< i}$,
into an approximation $U_i | A$, such that
$H(U_i | U_{< i}) \approx H(U_i | A)$.
This algorithm follows the same high-level strategy of \cite{GX15}, of approximating
the conditional distributions via quantized bins.
It turns out that this strategy can be implemented for Markov sources, using the
fact that Markov sources are constructive.
We define our notions of approximation, and formalize this strategy below.

\subsection{Notation and Preliminaries}

\begin{definition}[Associated Conditional Distribution]
Let $X$ be a random variable taking values in universe $U$,
and let $W$ be an arbitrary random variable.
Let $\cD_{X | w} \in \Delta(U)$ denote the conditional distribution of $X | \{W = w\}$.
Let $\DD_{X | W} \in \Delta(\Delta(U))$ be the distribution over $\cD_{X | w}$ defined
by sampling $w \sim W$.
We call $\DD_{X | W}$ the \emph{associated conditional distribution to $X | W$}.
\end{definition}

As above, we use boldface $\DD$ to denote objects of type $\Delta(\Delta(U))$.
Note that we can operate on conditional distributions as we would on their
underlying random variables. For example, for random variables $(A_1, W)$ and
$(A_2, Y)$ such that $A_1, A_2 \in \F_q$ and
$(A_1, W)$ is independent from $(A_2, Y)$,
the associated conditional distribution of $A_1 + A_2 | Y, W$ can be computed
from the associated conditional distributions of $A_1 | Y$ and $A_2 | W$.
To more easily describe such operations on conditional distributions (which may
not always arise from underlying random variables), we define the \emph{implicit
random variables} associated to a conditional distribution:

\begin{definition}[Implicit Random Variables Associated to Conditional
Distribution]
For every $\DD_{X | W} \in \Delta(\Delta(U))$,
define \emph{implicit random variables $X, W$ associated to $\DD_{X | W}$}
as random variables $(X, W)$ such that the associated conditional distribution
to $X | W$ is exactly $\DD_{X | W}$. Note that there is not a unique choice of
such random variables.
\end{definition}
%
Using this, we can naturally define (for example) $\DD_{A_1 + A_2 | W, Y}$
and $\DD_{A_2 | W, Y, A_1 + A_2}$ from any
$\DD_{A_1, W}, \DD_{A_2, Y} \in \Delta(\Delta(\F_q))$.
Note that we will always be performing such operations assuming independence of
the involved implicit random variables, ie $(A_1, W)$ and $(A_2, Y)$.

\begin{definition}[Conditional Distance]
Let $(X, W)$ and $(Y, Z)$ be two joint distributions, such that $X$ and $Y$ take
values in the same universe $U$.
Let $\DD_{X | W}$ and $\DD_{Y | Z}$ be the associated distributions in
$\Delta(\Delta(U))$.
Then, define the \emph{conditional distance}
$$
\dc(\DD_{X | W}, \DD_{Y | Z})
:= \min_{\substack{(A, B): \text{a distribution in } \Delta(\Delta(U) \x \Delta(U))\\
\text{s.t. marginals of $A$ match $\DD_{X | W}$, and}\\
\text{marginals of $B$ match $\DD_{Y | Z}$}
}}
~~\E_{(D_A, D_B) \sim (A, B)}[||D_A - D_B||_1]
$$

Note that $\dc$ can be equivalently defined as an optimal transportation cost
between two distributions in $\Delta(\Delta(U))$, where the cost of moving a
unit of mass between points $D_i, D_j \in \Delta(U)$ is $||D_i - D_j||_1$.
\end{definition}

This metric behaves naturally under post-processing:
\begin{claim}
\label{lem:postproc}
For all $\DD_{X|W}, \DD_{X'|W'} \in \Delta(\Delta(U))$, and any $f: U \to V$,
$$
\dc(\DD_{f(X)|W}, \DD_{f(X')|W'})
\leq \dc(\DD_{X|W}, \DD_{X'|W'})
$$
\end{claim}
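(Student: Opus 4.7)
The plan is to combine the optimal-transport interpretation of $\dc$ with the elementary fact that total variation distance is non-increasing under pushforward. To set up, I would fix an optimal coupling $(A^*, B^*)$, i.e.\ a distribution on $\Delta(U) \x \Delta(U)$ whose marginals are $\DD_{X|W}$ and $\DD_{X'|W'}$ and which achieves
\[
\E_{(D_A, D_B) \sim (A^*, B^*)}[\,||D_A - D_B||_1\,] \;=\; \dc(\DD_{X|W}, \DD_{X'|W'}).
\]

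Next I would push this coupling forward pointwise through $f$. Writing $f_*$ for pushforward of a distribution on $U$ to a distribution on $V$, let $(A', B')$ be the law of $(f_* D_A, f_* D_B)$ when $(D_A, D_B) \sim (A^*, B^*)$. I claim its marginals are exactly $\DD_{f(X)|W}$ and $\DD_{f(X')|W'}$: by definition of the associated conditional distribution, sampling $w \sim W$ and returning $\cD_{f(X)|w}$ is the same as sampling $w \sim W$ and returning $f_* \cD_{X|w}$, so pushforward by $f$ commutes with the ``associated conditional distribution'' construction. Thus $(A', B')$ is a valid (though not necessarily optimal) coupling of $\DD_{f(X)|W}$ and $\DD_{f(X')|W'}$.

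I would then invoke the standard contraction property of the $L_1$ norm under pushforward: for any $p, q \in \Delta(U)$, the triangle inequality gives
\[
||f_* p - f_* q||_1 \;=\; \sum_{v \in V} \Bigl| \sum_{u \in f^{-1}(v)} (p(u) - q(u)) \Bigr| \;\leq\; \sum_{v \in V} \sum_{u \in f^{-1}(v)} |p(u) - q(u)| \;=\; ||p - q||_1.
\]
Applying this pointwise inside the expectation over $(A^*, B^*)$ yields $\E[\,||f_* D_A - f_* D_B||_1\,] \leq \E[\,||D_A - D_B||_1\,]$. Since $(A', B')$ is a feasible coupling, its transport cost upper-bounds $\dc(\DD_{f(X)|W}, \DD_{f(X')|W'})$; chaining the two inequalities gives the claim.

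I do not anticipate any real obstacle here. The only mildly delicate point is verifying that $(A', B')$ has the correct marginals, which boils down to observing that pushforward by $f$ commutes with conditioning on $W$ and with the formation of the associated conditional distribution; this is immediate from unpacking the definition.
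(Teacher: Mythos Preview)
Your argument is correct and is the natural one: push the optimal coupling forward through $f_*$ and invoke the contraction of $L_1$ distance under deterministic maps. The paper states this claim without proof, so there is nothing to compare against; your write-up fills the gap cleanly.
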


For computational purposes, we represent the space of distributions using $\eps$-nets:
\begin{definition}[$\eps$-nets]
For every set $U$ and any $\eps > 0$, let $T_\eps(U) \subseteq \Delta(U)$ be an $\eps$-net of
$\Delta(U)$ with respect to $L_1$.
That is, for every $\cD \in \Delta(U)$, there exists $\hat{\cD} \in T_\eps(U)$ such
that
$||\cD - \hat{\cD}||_1 \leq \eps$.

Note that for $|U| = |\F_q| = q$, $T_{\eps}(U)$ can be chosen such that
$|T_{\eps}(U)| \leq {\frac{q}{\eps} + q \choose q} \leq (\frac{2q}{\eps})^q =
poly_q(1/\eps)$.

Moreover, $\Delta(T_{\eps}(U))$ is an $\eps$-net of $\Delta(\Delta(U))$ under
the $\dc$-metric.
\end{definition}

\subsection{Conditional Distribution Approximation}
The below procedure takes as input a conditional distribution
$\DD_{Z | W} \in \Delta(\Delta(\F_q))$,
and computes an approximation to the conditional distribution
of $U_I | (U_{\prec I}, W_1, \dots W_{k^t})$, for an index $I \in [k]^t$,
where $U := M^{\otimes t} Z$ and $\{(Z_i, W_i)\}_{i \in [k^t]}$ are independently defined by $\DD_{Z|W}$.

\begin{algorithm}[H]
\caption{Conditional Distribution Approximation}
\label{algo:dist-approx}
\begin{algorithmic}[1]
\Require{Conditional distribution on inputs $\DD_{Z | W} \in \Delta(\Delta(\F_q))$,
$\eps > 0$, $t \in \N$, index $I \in [k]^t$, and $M \in \F_q^{k \x k}$}
\Ensure{Conditional distribution $\tilde{\DD}_{U | W} \in \Delta(\Delta(\F_q))$,
an approximation to $U_I | (U_{\prec I}, W_1, \dots W_{k^t})$ for
$U := M^{\otimes t} Z$ and $\{(Z_i, W_i)\}_{i \in [k^t]}$ independently defined by $\DD_{Z|W}$.}
\Procedure{ApproxDist}{$\DD_{Z|W}, \eps, t, I = (I_1, \dots, I_t), M$}
\If{$t=0$}
\State \Return $\DD_{Z|W}$
\Else
\State $\hat{\DD}_{Z| Y} \gets \textsc{ApproxDist}(\DD_{Z|W}, \eps/(2k), t-1, I_{< t}=
(I_1, \dots, I_{t-1}), M)$ \label{ln:induction}
\State $j \gets I_t$.
\State Explicitly compute the following conditional distribution
$\hat{\DD}_{U_j | U_{< j}, Y_1, \dots Y_k} \in \Delta(\Delta(\F_q))$:

\State \quad \quad Let $(Z,Y)$ be the implicit random variables associated
to $\hat{\DD}_{Z | Y}$.
\State \quad \quad Let $\{(Z_i, Y_i)\}_{i \in [k]}$ be independent random
variables distributed identically to $(Z, Y)$.
\State \quad \quad Define random vector $U := M \cdot Z'$, Where
$Z' = (Z_1, \dots Z_k)$.
\State \quad \quad Let $\hat{\DD}_{U_j | U_{< j}, Y_1, \dots Y_k}$ be the associated
conditional distribution to $U_j | U_{< j}, Y_1, \dots Y_k$.  \label{ln:Mtensor}

\State Round $\hat{\DD}_{U_j | U_{< j}, Y_1, \dots Y_k}$ to
$\tilde{\DD}_{U | Y} \in \Delta(T_{\eps/2}(\F_q))$,
a point in the $\eps/2$-net of $\Delta(\Delta(\F_q))$ under $\dc$. \label{ln:round}
\State \Return $\tilde{\DD}_{U | Y}$.
\EndIf
\EndProcedure
\end{algorithmic}
\end{algorithm}

Note that if the input $\DD_{Z | W}$ is specified in an $\eps$-net
$\Delta(T_{\eps}(\F_q))$, then the above procedure runs in time
$poly_q(m, 1/\eps)$ for $m = k^t$.

\begin{lemma}
\label{lem:approxdist}
For all $\DD_{Z|W} \in \Delta(\Delta(U)), \eps > 0, t \in \N, M \in \F_q^{k \x k}$,
and $I \in [k]^t$, we have
$$\dc(\textsc{ApproxDist}(\DD_{Z|W}, \eps, t, I, M)
~~,~~
\DD_{U_I | U_{\prec I}, W_1, \dots W_{k^t}}) \leq \eps$$
where $\DD_{U_I | U_{\prec I}, W_1, \dots W_{k^t}}$ is the associated conditional
distribution to the random variables defined as follows.
Let $(Z,W)$ be the implicit random variables associated
to $\DD_{Z | W}$.
Let $\{(Z_i, W_i)\}_{i \in [k^t]}$ be independent random
variables distributed identically to $(Z, W)$.
Finally, define random vector $U := M^{\otimes t} \cdot Z'$, where
$Z' = (Z_1, \dots Z_{k^t})$.
\end{lemma}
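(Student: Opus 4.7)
The plan is to prove the lemma by induction on $t$, relying on two properties of the conditional distance $\dc$: tensorization under independent repetition, and non-expansion under deterministic post-processing (Claim~\ref{lem:postproc}). The error budget $\eps$ splits into $\eps/2$ from propagated recursion error and $\eps/2$ from the net-rounding in Line~\ref{ln:round}; the scheduling $\eps \mapsto \eps/(2k)$ chosen in the recursive call is exactly tuned so that the factor-$k$ blowup from tensorization leaves room for rounding. The base case $t=0$ is trivial: the algorithm returns $\DD_{Z|W}$, and the claimed target also reduces to $\DD_{Z|W}$ since $M^{\otimes 0}$ is the identity on $\F_q$, $I$ is empty, and there are no preceding coordinates.

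For the inductive step, assume the statement for $t-1$. By the inductive hypothesis applied to Line~\ref{ln:induction}, the recursive output $\hat\DD_{Z|Y}$ satisfies $\dc(\hat\DD_{Z|Y}, \DD^*_{Z|Y}) \le \eps/(2k)$, where $\DD^*_{Z|Y} := \DD_{U'_{I_{<t}} \mid U'_{\prec I_{<t}}, W_1, \ldots, W_{k^{t-1}}}$ is the true level-$(t-1)$ target, with $U' := M^{\otimes(t-1)} Z'$ and $Z' = (Z_1, \ldots, Z_{k^{t-1}})$ i.i.d.\ from $\DD_{Z|W}$. I would then view Line~\ref{ln:Mtensor} as a deterministic map $F$ on conditional distributions: given $\DD_{Z|Y}$, take $k$ i.i.d.\ copies $(Z_i, Y_i)$, set $U := M \cdot (Z_1, \ldots, Z_k)$, and output the associated conditional distribution of $U_j$ given $(U_{<j}, Y_1, \ldots, Y_k)$. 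By construction, $F(\DD^*_{Z|Y})$ equals exactly the true level-$t$ target $\DD_{U_I \mid U_{\prec I}, W_1, \ldots, W_{k^t}}$, since composing the exact level-$(t-1)$ conditional with the exact one-layer tensor operation reproduces the exact level-$t$ conditional.

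The core technical claim is that $F$ is $k$-Lipschitz in $\dc$. I would prove this by chaining three sub-claims. First, tensorization: $\dc(\DD_1^{\otimes k}, \DD_2^{\otimes k}) \le k \cdot \dc(\DD_1, \DD_2)$, established by taking $k$ independent copies of the optimal $\dc$-coupling and iterating the elementary bound $\|\mu_1 \otimes \mu_2 - \nu_1 \otimes \nu_2\|_1 \le \|\mu_1 - \nu_1\|_1 + \|\mu_2 - \nu_2\|_1$. Second, applying the linear map $M$ on the $Z$-side is deterministic post-processing and hence non-expansive by Claim~\ref{lem:postproc}. Third, refining the conditioning variable from $(Y_1, \ldots, Y_k)$ to $(U_{<j}, Y_1, \ldots, Y_k)$ preserves $\dc$: one takes the $\dc$-optimal coupling on conditional distributions of $U \mid Y_1, \ldots, Y_k$ and further couples $(u_{<j}, u'_{<j})$ from the matched conditionals, so that the expected $L_1$ distance of the induced $U_j$-marginals is bounded by the original coupling cost. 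Combining these yields $\dc(F(\hat\DD_{Z|Y}), F(\DD^*_{Z|Y})) \le k \cdot (\eps/(2k)) = \eps/2$. The rounding in Line~\ref{ln:round} adds at most another $\eps/2$ since $\Delta(T_{\eps/2}(\F_q))$ is an $(\eps/2)$-net under $\dc$, and the triangle inequality closes the proof.

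The main obstacle is the third sub-claim: verifying that enlarging the conditioning variable from $(Y_1, \ldots, Y_k)$ to $(U_{<j}, Y_1, \ldots, Y_k)$ does not inflate $\dc$. This lies outside the letter of Claim~\ref{lem:postproc}, which covers only deterministic functions of the $X$-coordinate, and requires building a joint coupling that simultaneously respects the original $(y, y')$-coupling and the induced joint distribution over $(u_{<j}, u'_{<j})$. A cleaner alternative, should this direct argument become cumbersome, is to strengthen the inductive invariant to track the full joint conditional $\DD_{U \mid Y_1, \ldots, Y_k}$ and only extract the $U_j$-marginal after the entire recursion has unfolded, thereby replacing the third sub-claim by a single application at the leaf.
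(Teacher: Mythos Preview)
Your proposal is correct and follows essentially the same inductive approach as the paper: induct on $t$, invoke the hypothesis on the recursive call to get error $\eps/(2k)$, argue the one-layer operation inflates by at most a factor $k$, and absorb the final $\eps/2$ from rounding. The paper's proof is considerably terser---it dispatches the Lipschitz step in a single line citing ``triangle inequality and Claim~\ref{lem:postproc}''---whereas your decomposition into tensorization, linear post-processing, and conditioning-refinement (and your flagging of the last as not literally covered by Claim~\ref{lem:postproc}) makes explicit what the paper leaves implicit.
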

\begin{proof}
First, notice that for $\eps=0$ (ie, omitting the rounding in
Line~\ref{ln:round}), these distributions are identical:
$$\textsc{ApproxDist}(\DD_{Z|W}, \eps=0, t, I, M) \equiv
\DD_{U_I | U_{\prec I}, W_1, \dots W_k}$$
It remains to prove that the rounding approximately preserves this.
We can prove the lemma by induction on $t$.
Suppose the statement holds for $t-1$.
Let $I' := I_{< t}$.
Let $\hat{\DD}_{Z | Y}$ be the result of the recursive call on Line~\ref{ln:induction}.
By the inductive hypothesis,
$$\dc(\hat{\DD}_{Z | Y} ~,~
\DD_{U'_{I'} | U'_{\prec I'}, W_1, \dots W_{k^{t-1}}}) \leq \eps/(2k)$$
where $U' := M^{\otimes t-1} \tilde Z$ for $\tilde Z := (Z_1, \dots, Z_{k^{t-1}})$ and
$\{(Z_i, W_i)\}_{i \in [k^t]}$ independent random variables
distributed according to $\DD_{Z | W}$.

Now, by the triangle inequality and Claim~\ref{lem:postproc},
the distribution $\hat{\DD}_{U_I | U_{\prec I}, Y_1, \dots Y_k}$
(defined in Line~\ref{ln:Mtensor} using the tensor-product recursion)
satisfies:
$$
\dc(\hat{\DD}_{U_I | U_{\prec I}, Y_1, \dots Y_k}
~,~
\DD_{U_I | U_{\prec I}, W_1, \dots W_{k^t}})
\leq \eps/2
$$
Now, by rounding to an $\eps/2$-net in Line~\ref{ln:round}, this distance
is distorted by at most an additional $\eps/2$. Thus:
$$
\dc(\tilde{\DD}_{U | Y}
~,~
\DD_{U_I | U_{\prec I}, W_1, \dots W_{k^t}})
\leq \eps \ . \qedhere
$$
\end{proof}

\subsection{Approximating Conditional Entropies}
Here we use Algorithm~\ref{algo:dist-approx} directly to approximate conditional
entropies:

\begin{theorem}
\label{thm:entropy-approx}
For every field $\F_q$, conditional distribution $\DD_{Z | W} \in
\Delta(\Delta(\F_q))$, matrix $M \in \F^{k \x k}$,
$t \in \N, m = k^t$,
and $\gamma > 0$,
consider the random variable $U := M^{\otimes t} Z$ where each
$\{(Z_i, W_i)\}_{i \in [m]}$ is sampled independently from $\cD_{Z, W}$.

Then, Algorithm~\ref{algo:entropy-approx}
outputs
$\hat h_1, \dots \hat h_m \gets \textsc{ApproxEntropy}(\DD_{Z|W}, \gamma, t, M)$
such that
$$\forall i \in [m]: \hat h_i = \bH(U_i | U_{< i}, W_1, \dots, W_m) \pm \gamma$$

Further, if the input $\DD_{Z | W}$ is specified in an $\eps$-net
$\Delta(T_{\eps}(\F_q))$, then the above procedure runs in time
$poly_q(m, 1/\eps, 1/\gamma)$.
\end{theorem}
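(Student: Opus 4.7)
The approach is to design \textsc{ApproxEntropy} so that it invokes \textsc{ApproxDist} (Algorithm~\ref{algo:dist-approx}) as a black box and then reads off each conditional entropy from the returned associated conditional distribution. Concretely, for each $i \in [m]$, identify $i$ with an index $I \in [k]^t$, compute $\td{\DD}_I \gets \textsc{ApproxDist}(\DD_{Z|W},\, \eps',\, t,\, I,\, M)$ for a precision $\eps'=\eps'(\gamma)$ to be chosen, and output
\[
\hat h_I \;:=\; \E_{D \sim \td{\DD}_I}[\bH(D)],
\]
where $\bH(D)$ denotes the normalized Shannon entropy of $D \in \Delta(\F_q)$. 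Since $\td{\DD}_I$ is supported on the $(\eps'/2)$-net $T_{\eps'/2}(\F_q)$, which has size $\poly_q(1/\eps')$, this expectation is a finite sum computable exactly in $\poly_q(1/\eps')$ time. By definition of conditional entropy, the target quantity is
\[
\bH(U_I \mid U_{\prec I}, W_1, \dots, W_m) \;=\; \E_{D \sim \DD_{U_I | U_{\prec I}, W_1, \dots, W_m}}[\bH(D)],
\]
so correctness reduces to showing that these two expectations are within $\gamma$.

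\textbf{From $\dc$-closeness to entropy-closeness.} The key technical step is to convert the $\dc$-approximation guaranteed by Lemma~\ref{lem:approxdist} into closeness of expected entropies. Standard continuity of Shannon entropy on an alphabet of size $q$ yields a concave, increasing function $\phi$ on $[0,1/2]$ with $\phi(\delta) = \Oh(\delta \log(q/\delta))$ such that any $D_1, D_2 \in \Delta(\F_q)$ with $\|D_1 - D_2\|_1 \le \delta$ satisfy $|\bH(D_1) - \bH(D_2)| \le \phi(\delta)$. Fix a coupling $(A,B)$ witnessing $\dc(\td{\DD}_I,\, \DD_{U_I | U_{\prec I}, W_1,\dots,W_m}) \le \eps'$, as guaranteed by Lemma~\ref{lem:approxdist}. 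Then by Jensen's inequality applied to the concave $\phi$,
\[
\bigl|\E_{D \sim \td{\DD}_I}[\bH(D)] - \E_{D \sim \DD_{U_I | U_{\prec I}, W_1,\dots,W_m}}[\bH(D)]\bigr|
\;\le\; \E_{(D_A, D_B) \sim (A,B)}\bigl[\phi(\|D_A - D_B\|_1)\bigr]
\;\le\; \phi(\eps').
\]
Choosing $\eps' = \Theta(\gamma / \log(q/\gamma))$ so that $\phi(\eps') \le \gamma$ gives $|\hat h_I - \bH(U_I \mid U_{\prec I}, W_1, \dots, W_m)| \le \gamma$ for every $I$, as required.

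\textbf{Runtime and main obstacle.} Each call to \textsc{ApproxDist} runs in $\poly_q(m, 1/\eps, 1/\eps') = \poly_q(m, 1/\eps, 1/\gamma)$ time by the remark following Algorithm~\ref{algo:dist-approx}, the post-processing to evaluate $\hat h_I$ takes only $\poly_q(1/\eps')$ additional time, and looping over the $m$ indices (or, more efficiently, sharing work across recursion trees via memoization) gives the claimed overall bound $\poly_q(m, 1/\eps, 1/\gamma)$. The only genuine obstacle is the entropy-continuity step: we must lift the standard alphabet-$q$ continuity estimate for Shannon entropy from $\Delta(\F_q)$ to the space $\Delta(\Delta(\F_q))$ under the $\dc$-metric, which is handled by combining an optimal coupling with Jensen's inequality applied to the concave modulus $\phi$. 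All remaining pieces follow essentially mechanically from Lemma~\ref{lem:approxdist}.
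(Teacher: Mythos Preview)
Your proposal is correct and follows essentially the same approach as the paper: call \textsc{ApproxDist} to get each conditional distribution to within some $\eps'$ in the $\dc$-metric, then read off the conditional entropy and use a continuity bound (optimal coupling plus concavity of $\delta \mapsto \delta\log(q/\delta)$) to translate $\dc$-closeness into closeness of conditional entropies. The paper packages the continuity step as a separate claim and simply takes $\eps' = \gamma^2$ rather than your slightly sharper $\eps' = \Theta(\gamma/\log(q/\gamma))$, but this is cosmetic; the structure of the argument is the same.
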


\begin{algorithm}[H]
\caption{Entropy Approximation}
\label{algo:entropy-approx}
\begin{algorithmic}[1]
\Require{$\gamma > 0$, $t \in \N$, Conditional distribution $\DD_{Z | W} \in
\Delta(\Delta(\F_q))$, and $M \in \F_q^{k \x k}$}
\Ensure{$\{ \hat h_i \in \R\}_{i \in [k^t]}$}
\Procedure{ApproxEntropy}{$\DD_{Z|W}, \gamma, t, M$}
\State $m \gets k^t$
\State $\eps \gets \gamma^2$
\ForAll{$I \in [k]^t$}
\State $\DD_{U | Y} \gets \textsc{ApproxDist}(\DD_{Z|W}, \eps, t, I, M)$
\State $\hat h_I \gets \bH(U | Y)$, the conditional entropy of the implicit random
variables $(U, Y)$ associated to $\DD_{U | Y}$.
\EndFor
\State \Return $\{\hat h_i\}_{i \in [k^t]}$ \Comment{Abusing notation by
identifying $[k]^t$ with $[k^t]$.}
\EndProcedure
\end{algorithmic}
\end{algorithm}

\begin{proof}[Proof of Theorem~\ref{thm:entropy-approx}]
Correctness of Algorithm~\ref{thm:entropy-approx} follows from the fact that
$\gamma^2$-closeness in the $\dc$-metric implies $\gamma$-closeness of (normalized)
conditional entropies, as in Lemma~\ref{lem:entropy-dc} below.
Thus, using Algorithm~\ref{algo:dist-approx} to approximate the conditional
distributions within $\gamma^2$ is sufficient.
\end{proof}

\begin{claim}
\label{lem:entropy-dc}
For any finite set $U$, consider any two conditional distributions
$\DD_{X | W}, \DD_{X' | W'} \in \Delta(\Delta(U))$.

Then, the conditional entropies of the associated random variables
satisfy
$$\dc(\DD_{X | W}, \DD_{X' | W'}) \leq \eps \implies
|H(X | W) - H(X' | W')| \leq \eps \log(\frac{|U|}{\eps}) \leq \eps^2 \log(|U|)$$
\end{claim}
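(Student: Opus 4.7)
The plan is to use the optimal coupling guaranteed by $\dc$, apply the standard uniform continuity of Shannon entropy pointwise along the coupling, and then move the expectation inside via concavity.

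First I would invoke the definition of $\dc$ as a transportation cost to obtain a joint distribution $(A,B) \in \Delta(\Delta(U) \x \Delta(U))$ whose marginals equal $\DD_{X|W}$ and $\DD_{X'|W'}$ respectively, and which satisfies $\E_{(D_A,D_B) \sim (A,B)}[\|D_A - D_B\|_1] \leq \eps$. Using the elementary identity $H(X|W) = \E_{D \sim \DD_{X|W}}[H(D)]$ (and the analogous one for $H(X'|W')$), together with the marginal constraints on the coupling, the entropy gap rewrites as an expectation along the coupling:
\[
|H(X|W) - H(X'|W')| \;\leq\; \E_{(D_A,D_B) \sim (A,B)}\bigl[|H(D_A) - H(D_B)|\bigr].
\]

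The workhorse is the standard uniform continuity of Shannon entropy (a form of the Csisz\'ar--K\"orner bound): for any $P, Q \in \Delta(U)$ with $\|P - Q\|_1 = \delta \leq 1$, one has $|H(P) - H(Q)| \leq \delta \log(|U|/\delta)$. Applying this pointwise inside the expectation above and writing $\delta := \|D_A - D_B\|_1$ gives
\[
|H(X|W) - H(X'|W')| \;\leq\; \E\bigl[\delta \log(|U|/\delta)\bigr].
\]
Since the function $x \mapsto x \log(|U|/x)$ is concave and non-decreasing on $(0, |U|/e]$, and we may assume $\eps$ lies in this range (otherwise the trivial bound $|H - H'| \leq \log|U|$ already implies the stated inequality up to a constant), Jensen's inequality followed by monotonicity yields
\[
\E\bigl[\delta \log(|U|/\delta)\bigr] \;\leq\; \E[\delta]\,\log(|U|/\E[\delta]) \;\leq\; \eps \log(|U|/\eps),
\]
which is the desired conclusion.

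The only mildly subtle point is the regime of validity of Jensen's inequality: the function $x \log(|U|/x)$ is concave only on $(0, |U|/e]$, so for $\eps$ outside this range one must separately fall back on the trivial bound $|H - H'| \leq \log|U|$. This is a minor technicality rather than a genuine obstacle, since the interesting regime is small $\eps$ (and indeed in the application to Theorem~\ref{thm:entropy-approx} one takes $\eps = \gamma^2$ with $\gamma$ small). No other step requires anything beyond the definition of $\dc$ and the standard continuity of entropy.
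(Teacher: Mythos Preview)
Your proof is correct and matches the paper's approach, which simply asserts that the claim ``follows directly from the transportation-cost definition of $\dc$ and Claim~\ref{lem:entropy-L1}''; you have supplied the Jensen/concavity step that the paper leaves implicit. (Neither you nor the paper addresses the trailing inequality $\eps\log(|U|/\eps) \leq \eps^2\log|U|$ in the statement, which is a typo --- it fails for small $\eps$ --- and is not what the application in Theorem~\ref{thm:entropy-approx} actually uses.)
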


The above claim in turn follows directly from the transportation-cost definition of
$\dc$, and Claim~\ref{lem:entropy-L1} below.

\begin{claim}
\label{lem:entropy-L1}
Let $X, Y$ be two random variables taking values in the same finite universe $\Sigma$.
Then, the \emph{non-normalized entropies} satisfy
$$||X - Y||_1 \leq \eps \implies |H(X) - H(Y)| \leq \eps
\log(\frac{|\Sigma|}{\eps})$$
\end{claim}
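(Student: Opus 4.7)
This is a standard Fannes-type continuity bound for entropy, and I would prove it using the coupling characterization of total variation distance, which gives the cleanest argument. Let $p, q$ denote the distributions of $X, Y$ on $\Sigma$, so that $\|p-q\|_1 = 2\cdot TV(p,q) \leq \eps$. By the coupling characterization of $TV$, there exists a joint distribution $(X', Y')$ with marginals $p$ and $q$ satisfying $\Pr[X' \neq Y'] = TV(p,q) \leq \eps/2$. Introduce the indicator $E := \mathbf{1}[X' \neq Y']$, a $\mathrm{Bernoulli}(\eta)$ variable with $\eta \leq \eps/2$.

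The chain rule of entropy gives $H(X) = H(X') \leq H(X', E) = H(E) + H(X' \mid E)$, while conditioning reduces entropy so $H(Y) = H(Y') \geq H(Y' \mid E)$. Crucially, conditioned on $E = 0$ we have $X' = Y'$ identically, so $H(X' \mid E=0) = H(Y' \mid E=0)$. Subtracting the two inequalities and using the trivial bound $H(X' \mid E=1) \leq \log|\Sigma|$ together with $H(Y' \mid E=1) \geq 0$:
$$H(X) - H(Y) \leq H(E) + \eta\bigl[H(X'\mid E=1) - H(Y'\mid E=1)\bigr] \leq H(E) + \eta \log|\Sigma|.$$
Now $H(E) = h(\eta) \leq \eta \log(1/\eta) + O(\eta)$, and since the function $x \mapsto x\log(|\Sigma|/x)$ is increasing on $(0, |\Sigma|/e]$, the right-hand side is maximized at $\eta = \eps/2$, giving
$$H(X) - H(Y) \;\leq\; (\eps/2)\log(|\Sigma|/(\eps/2)) + O(\eps).$$
Applying the symmetric argument gives the matching lower bound, so $|H(X) - H(Y)| \leq (\eps/2)\log(|\Sigma|/(\eps/2)) + O(\eps)$.

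The main (mild) obstacle is just the constant bookkeeping at the end: the above must be absorbed into the stated bound $\eps \log(|\Sigma|/\eps)$. This works in the nontrivial regime $\eps \leq 1/2$ because $\log(|\Sigma|/\eps) \geq 1$, so the additive $O(\eps)$ slack is easily handled, and the factor-of-two gap between $\eps/2$ and $\eps$ leaves ample room for the $\log 2$ loss; for larger $\eps$ the trivial bound $|H(X) - H(Y)| \leq \log|\Sigma|$ already implies the claim. An alternative direct route avoiding couplings is to write $H(p) - H(q) = \sum_i[f(p_i) - f(q_i)]$ with $f(x) = -x\log x$, apply the subadditivity bound $|f(a) - f(b)| \leq f(|a-b|)$ (a consequence of concavity together with $f(0) = 0$) termwise, and then conclude by Jensen's inequality on $\sum_i f(\delta_i)$ subject to $\sum_i \delta_i \leq \eps$; the coupling proof has the advantage of avoiding a small case split for the regime $p_i + q_i > 1$.
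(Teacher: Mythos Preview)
Your coupling argument is a correct Fannes-type proof and takes a genuinely different route from the paper. The paper's proof is exactly the ``alternative direct route'' you sketch at the end: it asserts the pointwise bound $|p\log(1/p)-q\log(1/q)|\le\eps\log(1/\eps)$ for $\eps=|p-q|$ (justified as ``basic calculus''), sums over coordinates to obtain $\sum_i\eps_i\log(1/\eps_i)$, and then uses concavity of $x\mapsto x\log(1/x)$ to conclude that this sum is maximized at $\eps_i=\eps/N$, yielding $\eps\log(N/\eps)$ on the nose. The paper's route thus lands on the stated constant with no slack to absorb, whereas your coupling bound leaves the expression $h(\eps/2)+(\eps/2)\log|\Sigma|$ to be massaged into $\eps\log(|\Sigma|/\eps)$; your claim that the regime $\eps\le 1/2$ already suffices for this is a little optimistic on the constants, but the remaining bookkeeping is routine. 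Conversely, your instinct that the termwise route hides a subtlety is well-founded: the pointwise inequality the paper invokes is in fact \emph{false} for some $p,q\in[0,1]$ (take $p=1$, $q=1/e$, where the left side equals $1/e$ but the right side is $(1-1/e)\log\frac{1}{1-1/e}<1/e$), so the paper's ``basic calculus'' step needs more care than it is given. Your coupling proof sidesteps this issue entirely.
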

\begin{proof}
Let $N := |\Sigma|$, and assume without loss of generality that the random
variables take values in $[N]$.
Let $p_i := \Pr[X = i]$ and $q_i := \Pr[Y = i]$.

First, it can be confirmed by basic calculus that
$$\forall p, q \in [0, 1], \eps := |p - q|: \quad
|p\log(1/p) - q \log(1/q)| \leq \eps \log(1/\eps)$$

Thus, defining $\eps_i := |p_i - q_i|$, we have
\begin{align*}
|H(X)-H(Y)| &= |\sum_{i=1}^N p_i \log(1/p_i) - q_i \log(1/q_i)| \\
&\leq \sum_{i=1}^N |p_i \log(1/p_i) - q_i \log(1/q_i)|\\
&\leq \sum_{i=1}^N \eps_i \log(1/\eps_i)
\end{align*}
We have $\sum_i \eps_i = ||X-Y||_1 \leq \eps$, so this quantity is maximized for
$\eps_i = \frac{\eps}{N}$.
Thus,
$$|H(X)-H(Y)| \leq \eps \log(\frac{N}{\eps}) \ .  \qedhere $$
\end{proof}

\subsection{Nice Subset Selection}
Now we can describe how to find ``nice'' sets.
We first approximate the conditional distribution
$\DD_{Z_t | Z_{< t}} \in \Delta(\Delta(\F_q))$ for $Z_1, \dots Z_t \sim \HMM_t$,
by sampling.
This crucially relies on the fact that $\HMM$ is a constructive source (ie,
using the Forward Algorithm).
Then we use Algorithm~\ref{algo:entropy-approx} to estimate conditional
entropies, and select high-entropy indices.

\begin{algorithm}[H]
\caption{{\sc Polar-Preprocess}}
\label{algo:subset}
\begin{algorithmic}[1]
    \Require{$q,k,t \in \N$ with $q$ prime,  $M \in \F_q^{k \x k}$, and Markov source $\HMM$}
    \Comment{$m = k^t, n = m^2$}
    \Ensure{Sets $S_1, S_2, \dots S_m \subseteq [m]$}
\Procedure{Polar-Preprocess}{$q,k,t, M, \HMM$}
	\State $m \gets k^t$; $\gamma \gets \frac{1}{n^{10}}$;
	 $N \gets |T_{\gamma}(\F_q)|$; $R \gets n (N / \gamma)^2$ \Comment{$N \leq poly_q(1/\gamma)$}
    \ForAll{$j \in [m]$}
        \ForAll{$i = 1, 2, \dots, R$}
            \State Sample a sequence $w_i := (y_1, y_2, \dots y_{j-1})$ from $\HMM$.
            \State Compute $\cD_{w_i} \in \Delta(\F_q)$, the distribution of
            $Y_j | Y_{< j} = w_i$,
            using the Forward Algorithm~\ref{algo:forward} for $\HMM$.
        \EndFor
        \State Let $\widetilde{\DD}_{Y | W} \in \Delta(\Delta(F_q))$ be the empirical distribution
        of $\cD_w$, from the samples $\cD_{w_i}$ above. \label{ln:sample}
        \State $\{\hat h_1, \dots \hat h_m\} \gets
        \textsc{ApproxEntropy}(\widetilde{\DD}_{Y|W}, \gamma = \frac{1}{n^4}, t, M)$
        \State $S_j \gets \{i \in [m]: \hat h_i > \frac{1}{n^3}\}$
    \EndFor
    \State \Return $S_1, S_2, \dots S_j$.
\EndProcedure
\end{algorithmic}
\end{algorithm}

\begin{proof}[Proof of Theorem~\ref{thm:preproc}]
First, we claim that for each $j$, the sampling step (Line~\ref{ln:sample}) produces
$\widetilde{\DD}_{Y | W}$
such that
$$\dc(\widetilde{\DD}_{Y | W}, \DD_{Y_j | Y_{< j}}) \leq 2 \gamma$$, except with
probability $\exp(-\Omega(n))$.
Here, $\DD_{Y_j | Y_{< j}}$ denotes the conditional distribution of random
variables $(Y_1, Y_2, \dots Y_j) \sim \HMM_j$.
It is sufficient to show that these distributions are $\gamma$-close,
after rounding both to $\Delta(T_{\gamma})$, a $\gamma$-net of
$\Delta(\Delta(\F_q))$.
Let $R_{\gamma}: \Delta(\F_q) \to T_{\gamma}$ be the function that rounds points
to their nearest net-point. Let this naturally lift to a function
$\bar{R}_\gamma: \Delta(\Delta(\F_q)) \to \Delta(T_\gamma)$.
Now, the net $T_\gamma$ has only $N$ points,
so sampling $n (N / \gamma)^2$ points from $\bar{R}_\gamma(\DD_{Y_j | Y_{< j}})$ will approximate the mass of each point to within $\pm (\gamma / N)$, except with probability
$\exp(-\Omega(n))$.
Thus, we have $\dc(\bar{R}_\gamma(\widetilde{\DD}_{Y | W}),
\bar{R}_\gamma(\DD_{Y_j | Y_{< j}})) \leq \gamma$ with high probability,
and this implies
$\dc(\widetilde{\DD}_{Y | W}, \DD_{Y_j | Y_{< j}}) \leq 2 \gamma$.

Now, consider the following random variables.
Let $(Y, W)$ be
the implicit random variables
associated with $\tilde{\DD}_{Y | W}$,
and let $(Y', W')$ be those associated with
$\DD_{Y_j | Y_{< j}}$.
Let $U := M^{\otimes t} Z$
and $U' := M^{\otimes t} Z'$ where $\{(Z_i, W_i)\}$ are independently
distributed as $(Y, W)$, and $\{(Z'_i, W'_i)\}$ are independently
distributed as $(Y', W')$.
Now, by triangle inequality, we have
$$\dc(\DD_{U_i | U_{< i}}, \DD_{U'_i | U'_{< i}}) \leq 2 \gamma m$$.

Using the above definitions, by Lemma~\ref{lem:entropy-dc},
the conditional entropies $H(U_i | U_{< i})$
and
$H(U'_i | U'_{< i})$
differ by at most $\pm \sqrt{2\gamma m} =
\Oh(1/n^4)$.
Thus, set $S_j$ selected will have all entropies
$$\forall i \in S_j: \bH(\bU_{(i, j)} | \bU_{\prec (i, j)}) = \frac{1}{n^3} \pm
\Oh(\frac{1}{n^4}) = \Theta(\frac{1}{n^3})$$

Finally, by the Polarization Lemma (Lemma~\ref{lem:polarization}),
all but $\eps n$ of the entropies
$\{\bH(\bU_{(i, j)} | \bU_{\prec (i, j)})\}_{i \in S_j}$
are $\geq 1-\eps$.
Thus, the size of $\sum_j |S_j|$ is at most
$$\eps n + \bH(\Ziid) / (1-\eps)
\leq 2 \eps n + \bH(\Ziid)
\leq 2 \eps n + \bH(\Ziid) + \exp(-\Omega(n))
\leq 3\eps n + \bH(\Ziid)
$$
Now, by Claim~\ref{lem:entropy-L1} and the closeness of distributions $\Ziid$ and
$Z$, we have
$\bH(\Ziid) \leq \bH(Z) + n^2 \exp(- \eps m / \tau) < \bH(Z) + \Oh(\eps n)$.
Thus, finally,  $\sum_j |S_j| \leq \bH(Z) + \Oh(\eps n)$ as required.
\end{proof}

\section{Proofs of Theorems~\ref{thm:main}~and~\ref{thm:main-channel}}
\label{sec:proofs}

Combining Theorem~\ref{thm:preproc} (to compute nice sets) with Theorem~\ref{thm:main-analysis} (compressing and decompressing assuming nice sets), Theorem~\ref{thm:main} follows immediately.

\begin{proof}[Proof of Theorem~\ref{thm:main}]
The algorithms claimed are Algorithm~\ref{algo:subset} for preprocessing, Algorithm~\ref{algo:compressor} for compressing and Algorithm~\ref{algo:fast-decompressor} for decompression. Theorem~\ref{thm:preproc} asserts that Algorithm~\ref{algo:subset} returns a nice sequence of sets $S_1,\ldots,S_m$ with all but exponentially small probability in $n$. And Theorem~\ref{thm:main-analysis} asserts that if $S_1,\ldots,S_m$ are nice then Algorithm~\ref{algo:compressor}~and~\ref{algo:fast-decompressor} compress and decompress correctly with high probability over the output of the Markovian source. This yields the theorem.
\end{proof}

Finally we show how Theorem~\ref{thm:main-channel} follows from
Theorem~\ref{thm:main}.

\begin{proof}[Proof of Thereom~\ref{thm:main-channel}]
	Let $H \in \F_q^{s \x n}$ be the matrix specifying the (linear) compression
	scheme given by the Preprocessing Algorithm in Theorem~\ref{thm:main},
	when applied to Markov source $\HMM$.
	The code $C$ for the additive Markov Channel $\Ch_{\HMM}$ is simply specified by the
	nullspace of $H$, ie encoding is given by $C(x) := Nx$ where $N \in \F_q^{n \x
		n-s}$ spans $Null(A)$.

	Note that due to the structure of $H$, a nullspace matrix $N$ can be applied
	in $\Oh_q(n \log n)$ time.
	In particular, $H$ is a subset of rows of the block-diagonal matrix
	$P \in \F_q^{n \x n}$, where each $\sqrt{n} \x \sqrt{n}$ block is
	the tensor-power $M^{\otimes t}$.
	Thus, $P^{-1}$ is also block-diagonal with blocks $(M^{-1})^{\otimes t}$, and so
	can be applied in time $\Oh_q(n \log n)$.
	The matrix $N$ can be chosen as just a subset of columns of $P^{-1}$, and hence
	can also be applied in time $\Oh_q(n \log n)$.

	Let $y_1, y_2, \dots y_n \in \F_q$ be distributed according to $\HMM$, and $y :=
	(y_1, \dots y_n) \in \F_q^n$.
	To decode from $z = Nx + y$, the decoder first applies $H$ (by running the
	compression algorithm of Theorem~\ref{thm:main}), to compute $Hz = HNx + Hy = Hy$.
	Then, the decoder runs the decompression algorithm of Theorem~\ref{thm:main} on
	$Hy$ to determine $y$.
	Finally, the decoder can compute $y - z$ to find the codeword sent $(Nx)$,
	and thus determine $x$.
	(Again using the structure of $P$, as above, to determine $x$ from
	$Nx$ in $\Oh_q(n \log n)$ time).
\end{proof}

\bibliographystyle{plain}
\bibliography{polar-refs}

\begin{thebibliography}{10}

\bibitem{arikan-polar}
Erdal Ar{\i}kan.
\newblock Channel polarization: A method for constructing capacity-achieving
  codes for symmetric binary-input memoryless channels.
\newblock {\em IEEE Transactions on Information Theory}, pages 3051--3073, July
  2009.

\bibitem{BGNRS}
Jaros{\l}aw B{\l}asiok, Venkatesan Guruswami, Preetum Nakkiran, Atri Rudra, and
  Madhu Sudan.
\newblock General strong polarization.
\newblock In {\em Proceedings of the 50th Annual ACM SIGACT Symposium on Theory
  of Computing}, pages 485--492. ACM, 2018.

\bibitem{polar-small-error}
Jaroslaw Blasiok, Venkatesan Guruswami, and Madhu Sudan.
\newblock Polar codes with exponentially small error at finite block length.
\newblock In {\em LIPIcs-Leibniz International Proceedings in Informatics},
  volume 116. Schloss Dagstuhl-Leibniz-Zentrum fuer Informatik, 2018.

\bibitem{Sasoglu}
Eren \c{S}a\c{s}o\u{g}lu.
\newblock {\em Polar coding theorems for discrete systems}.
\newblock PhD thesis, Ecole Polytechnique F\'{e}d\'{e}rale de Lausanne, 2011.

\bibitem{GX15}
Venkatesan Guruswami and Patrick Xia.
\newblock Polar codes: Speed of polarization and polynomial gap to capacity.
\newblock {\em {IEEE} Trans. Information Theory}, 61(1):3--16, 2015.
\newblock Preliminary version in Proc. of FOCS 2013.

\bibitem{HAU14}
Seyed~Hamed Hassani, Kasra Alishahi, and R{\"{u}}diger~L. Urbanke.
\newblock Finite-length scaling for polar codes.
\newblock {\em {IEEE} Trans. Information Theory}, 60(10):5875--5898, 2014.

\bibitem{KSU10}
Satish~Babu Korada, Eren Sasoglu, and R{\"u}diger~L. Urbanke.
\newblock Polar codes: Characterization of exponent, bounds, and constructions.
\newblock {\em IEEE Transactions on Information Theory}, 56(12):6253--6264,
  2010.

\bibitem{PHTT}
Ramtin Pedarsani, Seyed~Hamed Hassani, Ido Tal, and Emre Telatar.
\newblock On the construction of polar codes.
\newblock In {\em Proceedings of 2011 IEEE International Symposium on
  Information Theory}, pages 11--15, 2011.

\bibitem{SasogluTal}
Eren Sasoglu and Ido Tal.
\newblock Polar coding for processes with memory.
\newblock In {\em Proceedings of the {IEEE} International Symposium on
  Information Theory}, pages 225--229, 2016.

\bibitem{ShuvalTal}
Boaz Shuval and Ido Tal.
\newblock Fast polarization for processes with memory.
\newblock In {\em Proceedings of the {IEEE} International Symposium on
  Information Theory}, pages 851--855, 2018.

\bibitem{TalVardy}
Ido Tal and Alexander Vardy.
\newblock How to construct polar codes.
\newblock {\em IEEE Transactions on Information Theory}, 59(10):6562--6582, Oct
  2013.

\bibitem{WHYLH}
Runxin Wang, Junya Honda, Hirosuke Yamamoto, Rongke Liu, and Yi~Hou.
\newblock Construction of polar codes for channels with memory.
\newblock In {\em Proceedings of the 2015 {IEEE} Information Theory Workshop -
  Fall (ITW)}, pages 187--191, 2015.

\bibitem{WLH}
Runxin Wang, Rongke Liu, and Yi~Hou.
\newblock Joint successive cancellation decoding of polar codes over
  intersymbol interference channels.
\newblock {\em CoRR}, abs/1404.3001, 2014.

\end{thebibliography}

\begin{appendix}
\section{Forward Algorithm}
\label{app:forward-algo}

\begin{algorithm}[H]
\caption{Forward Algorithm}
\label{algo:forward}
\begin{algorithmic}[1]
    \Require{$n \in \N$. Markov source $\HMM$ with state-space $[\ell]$, alphabet $\Sigma$,
    stationary distribution $\pi \in \Delta([\ell])$,
    transition matrix $\Pi \in \R^{\ell \x \ell}$},
    and output distributions $\{\cS_i \in \Delta(\Sigma)\}_{i \in [\ell]}$.
    And $y = (y_1, y_2, \dots y_{n-1})$ for $y_i \in \Sigma$.
    \Ensure{Distribution $Y_n \in \Delta(\Sigma)$}
\Procedure{ForwardInfer}{$\HMM=(\ell, \Sigma, \pi, \Pi, \{\cS_i\}), n, y$}
    \State $s_0 \gets \pi$.
    \ForAll{$t = 1, 2, \dots {n-1}$}
        \State Define $s_t \in \Delta([\ell])$ by
        $s_t(i) \gets
        \frac{
        (\Pi s_{t-1})_i \cdot \cS_i(y_t)
        }
        {
        \sum_{j \in [\ell]}
        (\Pi s_{t-1})_j \cdot \cS_j(y_t)
        }$
        \Comment{Treating $s_{t-1}$ as a vector in the probability simplex
        embedded in $\R^\ell$}
    \EndFor
    \State $s_n \gets \Pi s_{n-1}$.
    \State \Return The distribution $Y_n := \E_{i \sim s_n}[\cS_i]$.
\EndProcedure
\end{algorithmic}
\end{algorithm}

\begin{claim}
For every Markov source $\HMM = (\ell, \Sigma, \pi, \Pi, \{\cS_i\})$,
let random variables $Y_1, \dots Y_n \sim \HMM_n$.
For every setting $y = (y_1, y_2, \dots y_{n-1})$ for $y_i \in \Sigma$,
let $\cD_{Y_n | Y_{< n} = y}$ denote the distribution of $Y_n$ conditioned on
$Y_{< n} = y$.
Then,
$$
\textsc{ForwardInfer}(\HMM, n, y)
\equiv
\cD_{Y_n | Y_{< n} = y}
$$
\end{claim}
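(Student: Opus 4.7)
The plan is to prove by induction on $t$ that for each $t \in \{0, 1, \ldots, n-1\}$, the vector $s_t$ computed inside the loop is exactly the posterior distribution of the hidden state $X_t$ conditioned on the first $t$ observations, i.e. $s_t \equiv \cD_{X_t \mid Y_{\leq t} = y_{\leq t}}$ (with the convention that $Y_{\leq 0}$ is vacuous). The claim will then follow by performing one further one-step ``prediction'' for $X_n$, followed by marginalizing over the emission.

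For the base case $t = 0$, no observation has been used and $s_0 = \pi$ is precisely the marginal distribution of $X_0$ under $\HMM$, since $\pi$ is the stationary distribution of the underlying chain (and $X_0 \sim \pi$). For the inductive step, assuming $s_{t-1} \equiv \cD_{X_{t-1} \mid Y_{<t} = y_{<t}}$, I would first invoke the Markov property of the hidden chain to conclude that the one-step predictive distribution of $X_t$ given the same history is obtained by applying the transition operator to $s_{t-1}$, so that $\Pr[X_t = i \mid Y_{<t} = y_{<t}] = (\Pi s_{t-1})_i$ in the notation used by the algorithm. Next I would apply Bayes' rule, using the conditional independence $Y_t \perp Y_{<t} \mid X_t$ that is built into the HMM, to obtain
$$\Pr[X_t = i \mid Y_{\leq t} = y_{\leq t}] \;\propto\; \Pr[Y_t = y_t \mid X_t = i] \cdot \Pr[X_t = i \mid Y_{<t} = y_{<t}] \;=\; \cS_i(y_t) \cdot (\Pi s_{t-1})_i,$$
with the denominator in the algorithm serving as the normalization constant. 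This matches the update rule defining $s_t$, closing the induction.

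To finish the proof, I would apply one more step of the same Markov-property argument after the loop terminates: since $s_{n-1} \equiv \cD_{X_{n-1} \mid Y_{<n} = y_{<n}}$, the line $s_n \gets \Pi s_{n-1}$ yields $s_n \equiv \cD_{X_n \mid Y_{<n} = y_{<n}}$. Finally, using once more that $Y_n \perp Y_{<n} \mid X_n$ (so that $\Pr[Y_n = y \mid Y_{<n} = y_{<n}, X_n = i] = \cS_i(y)$), marginalizing over $X_n$ gives
$$\cD_{Y_n \mid Y_{<n} = y_{<n}} \;=\; \E_{i \sim s_n}[\cS_i],$$
which is exactly the returned distribution.

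The proof is essentially textbook, so I do not expect a serious obstacle. The only care needed is bookkeeping: I must be precise about whether $\Pi$ or its transpose acts on the probability vector (the statement ``treating $s_{t-1}$ as a vector in the probability simplex'' fixes this convention), and I must be careful to apply Bayes' rule to the \emph{predictive} distribution $\Pi s_{t-1}$ rather than to the \emph{filtered} distribution $s_{t-1}$, since $Y_t$ is emitted from $X_t$ and not from $X_{t-1}$.
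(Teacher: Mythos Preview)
Your proposal is correct and matches the paper's approach exactly: the paper simply asserts, without further detail, that the claim ``follows inductively, from the fact that $s_t$ as maintained by the algorithm is exactly the distribution of $S_t \mid \{Y_{\leq t} = y_{\leq t}\}$, where $S_t$ is the hidden state of $\HMM$ after $t$ steps.'' Your write-up is a fleshed-out version of precisely this induction, including the final prediction-plus-emission step, and your caveats about the transpose convention and applying Bayes to the predictive (not filtered) distribution are well placed.
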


This follows inductively, from the fact that
$s_t$ as maintained by the algorithm is exactly
the distribution of $S_t | \{Y_{\leq t} = y_{\leq t}\}$,
where $S_t$ is the hidden state of $\HMM$ after $t$ steps.

\section{Connection to Learning Parity with Noise}
\label{app:LPN}

The problem of learning parity with noise (LPN) is the following. Fix an
(unknown) string $a \in \F_2^\ell$ and $\eta > 0$ and let $D_{a,\eta}$ be the
distribution on $\F_2^{\ell+1}$ whose samples $(x,y)$ are generated as follows:
Draw $x \in \F_2^\ell$ uniformly and let $z\in Bern(\eta)$ be drawn independent of $x$ and let $y = \langle a,x \rangle + z$ where  $ \langle a,x \rangle = \sum_{i=1}^\ell a_i x_i $. Given samples $(x_1,y_1),\ldots,(x_m,y_m)$ drawn i.i.d. from such a distribution, the LPN problem is the task of ``learning'' $a$.

It is well known that $a$ is uniquely determined by $O(\ell)$ samples (i.e., $m = O(\ell)$) where the constant in the $O(\cdot)$ depends on $\eta < 1/2$. However no polynomial time algorithms are known that work with $m = \poly(\ell)$ and determine $a$ for any $\eta > 0$ and indeed this is believed to be a hard task in learning. We refer to this hardness assumption as the LPN hypothesis.

The connection to learning Markovian sources comes from the fact that samples from the distribution $D_{a,\eta}$ can be generated by an $O(\ell)$-state Markov chain. (Briefly the states are indexed $(i,b,c)$ indicating $\sum_{j=1}^{i-1} a_j x_j = b$ and $x_i = c$. For $i < \ell$ the state $(i,b,c)$ outputs $c$ and transtions to $(i+1,b+c,0)$ w.p. 1/2 and to $(i+1,b+c,1)$ w.p. 1/2. When $i = \ell$, the state $(i,b,c)$ outputs $(c,b+c)$ w.p. $1-\eta$ and $(c,b+c+1)$ w.p. $\eta$ and transitions to $(1,0,0)$ w.p. 1/2 and to $(1,0,1)$ w.p. $1/2$.)
The entropy of this source is $(\ell+H(\eta))/(\ell+1)$. A compression with $\eps = (1 - H(\eta))/(2(\ell+1))$ with $\poly(\ell/\eps)$ samples from the source would distinguish this source from purely random strings which in turn enables recovery of $a$, contradicting the LPN hypothesis.

We thus conclude that compressing an {\em unknown} Markov source with number of samples that is a polynomial in the mixing time and the inverse of the gap to capacity contradicts the LPN hypothesis.

\end{appendix}

\end{document}